\newcommand{\nc}{\newcommand}
\newcommand{\DMO}{\DeclareMathOperator}
\nc{\MS}{\mathcal{S}}
\nc{\MR}{\mathcal{R}}
\nc{\cM}{\mathcal{M}}
\nc{\MZ}{\mathcal{Z}}
\DMO{\Binom}{Binom}
\renewcommand{\E}{\mathbb{E}}
\DMO{\Var}{Var}
\newcommand{\N}{\mathbb{N}}
\nc{\BN}{\mathbb{N}}
\nc{\BZ}{\mathbb{Z}}
\newcommand{\eps}{\varepsilon}
\nc{\ep}{\eps}
\newcommand{\cF}{\mathcal{F}}
\newcommand{\cE}{\mathcal{E}}
\newcommand{\cG}{\mathcal{G}}
\newcommand{\cV}{\mathcal{V}}
\newcommand{\cA}{\mathcal{A}}
\newcommand{\bA}{\mathbb{A}}
\newcommand{\bB}{\mathbb{B}}
\newcommand{\calS}{\mathcal{S}}
\DeclareMathOperator{\den}{den}
\DeclareMathOperator{\mindeg}{min-deg}
\renewcommand{\varepsilon}{\epsilon}
\let\Pr=\relax
\DeclareMathOperator*{\Pr}{\mathbf{Pr}}
\nc{\rgp}{\mathrm{RP}}
\newtheorem{theorem}{Theorem}
\newtheorem*{theorem*}{Theorem}
\newtheorem{observation}[theorem]{Observation}
\newtheorem{lemma}[theorem]{Lemma}
\newtheorem*{lemma*}{Lemma}
\newtheorem*{obs*}{Observation}
\newtheorem{corollary}[theorem]{Corollary}
\newtheorem{hypothesis}[theorem]{Hypothesis}
\newcommand\MYcurrentlabel{xxx}
\newcommand{\MYstore}[2]{%
  \global\expandafter \def \csname MYMEMORY #1 \endcsname{#2}%
}
\newcommand{\MYload}[1]{%
  \csname MYMEMORY #1 \endcsname%
}
\newcommand{\MYnewlabel}[1]{%
  \renewcommand\MYcurrentlabel{#1}%
  \MYoldlabel{#1}%
}
\newcommand{\MYdummylabel}[1]{}
\newcommand{\torestate}[1]{%
  \let\MYoldlabel\label%
  \let\label\MYnewlabel%
  #1%
  \MYstore{\MYcurrentlabel}{#1}%
  \let\label\MYoldlabel%
}
\newcommand{\restatelemma}[1]{%
  \let\MYoldlabel\label
  \let\label\MYdummylabel
  \begin{lemma*}[Restatement of Lemma~\ref{#1}]
    \MYload{#1}
  \end{lemma*}
  \let\label\MYoldlabel
}
\newcommand{\restatetheorem}[1]{%
  \let\MYoldlabel\label
  \let\label\MYdummylabel
  \begin{theorem*}[Restatement of Theorem~\ref{#1}]
    \MYload{#1}
  \end{theorem*}
  \let\label\MYoldlabel
}
\title{The Strongish Planted Clique Hypothesis and Its Consequences}
\author{Pasin Manurangsi\\ Google Research\\ \texttt{pasin@google.com}
\and Aviad Rubinstein\\ Stanford University\\ \texttt{aviad@cs.stanford.edu}
\and Tselil Schramm\\ Stanford University\\ \texttt{tselil@stanford.edu}}
\date{\today}
\begin{document}

\maketitle

\begin{abstract}
We formulate a new hardness assumption, the {\em Strongish Planted Clique Hypothesis (SPCH)}, which postulates that any algorithm for planted clique must run in time $n^{\Omega(\log{n})}$ (so that the state-of-the-art running time of $n^{O(\log n)}$ is optimal up to a constant in the exponent).

We provide two sets of applications of the new hypothesis. 
First, we show that SPCH implies (nearly) tight inapproximability results for the following well-studied problems in terms of the parameter $k$: Densest $k$-Subgraph, Smallest $k$-Edge Subgraph, Densest $k$-Subhypergraph, Steiner $k$-Forest, and Directed Steiner Network with $k$ terminal pairs. For example, we show, under SPCH, that no polynomial time algorithm achieves $o(k)$-approximation for Densest $k$-Subgraph. This inapproximability ratio improves upon the previous best $k^{o(1)}$ factor from (Chalermsook et al., FOCS 2017). Furthermore, our lower bounds hold even against fixed-parameter tractable algorithms with parameter $k$.

Our second application focuses on the complexity of graph pattern detection. For both induced and non-induced graph pattern detection, we prove hardness results under SPCH, which improves the running time lower bounds obtained by (Dalirrooyfard et al., STOC 2019) under the Exponential Time Hypothesis. 
\end{abstract}

\setcounter{page}{0}
\thispagestyle{empty}
\newpage

\section{Introduction}
The last couple of decades have seen dramatic advances in our understanding of parameterized, fine-grained, and average-case complexity. 
To a large extent, this progress has been enabled by bolder computational hardness assumptions, beyond the classical $\P \neq \NP$. 
Two notable assumptions in these fields are the Exponential Time Hypothesis and the Planted Clique Hypothesis.  
In this paper we propose a new hypothesis, the {\em Strongish Planted Clique Hypothesis}, which strengthens the Planted Clique Hypothesis in the style of the Exponential Time Hypothesis.
We show that this hypothesis has interesting implications in the parameterized complexity of both approximation problems and graph pattern detection.

\paragraph{Exponential Time Hypothesis}
The Exponential Time Hypothesis (ETH)~\cite{IP01} is a pessimistic version of $\P \neq \NP$ which postulates that solving 3-SAT on $n$ variables requires time $2^{\Omega(n)}$. In other words, the running times of the state-of-the-art (and the brute-force) algorithms for 3-SAT are optimal up to a constant factor in the exponent. 
ETH has important applications in parameterized complexity (e.g.~\cite{CJ03,LMS11,Lin15,CPP16,DVVW19}) and hardness of approximation (e.g.~\cite{BKW15,BravermanKRW17,Rubinstein17,Manurangsi17,MR17,DinurM18}). 
In the past several years, further progress was achieved by assuming stronger variants of ETH such as the Strong ETH (SETH) in fine-grained complexity~\cite{IPZ01,Williams18}, Gap-ETH in parameterized complexity~\cite{ChalermsookCKLM17,BGS18,LRSZ17}, and ETH for~\PPAD~in Algorithmic Game Theory~\cite{BPR16,Rubinstein16}. 

\paragraph{Planed Clique Hypothesis}
In the Planted $\kappa$-Clique Problem, the goal is to distinguish (with high probability) between graphs sampled from one of the following distributions: Uniformly at random%
\footnote{I.e. from the {\em Erd\H{o}s-R\'{e}nyi} (ER) distribution over $n$-vertex graphs where each edge appears independently with probability $1/2$.}; and
uniformly at random, with an added $\kappa$-clique.
While statistically it is easy to distinguish the two distributions for $\kappa$ as little as $2.1\log(n)$, the {\em Planted Clique Hypothesis} (PCH) postulates that no polynomial time algorithm can solve this problem, even for $\kappa$ as large as $o(\sqrt{n})$.
The history of this problem goes back to Karp~\cite{Karp76} and Jerrum~\cite{Jerrum92}, and in the past decade it has been popular as a hardness assumption for both worst-case~\cite{ABBG10,HK11,alon2011inapproximability,BBBCT13,BCKS16} and average-case~\cite{BR13,HWX15,WBS16,GMZ17,BBH18,BB19} problems. 
A simple $n^{\Theta(\log(n))}$-time algorithm for the planted-$\kappa$-clique problem non-deterministically guesses $\ell=\Theta(\log(n))$ vertices from the clique, and then checks whether all of their common neighbors form a clique.
There are several other algorithms that also solve this problem in time $n^{\Theta(\log(n))}$~\cite{FS97,FK03,MM15,Barman18}, but no faster algorithm is known for $\kappa = O(n^{0.49})$.

\subsection*{Strongish Planted Clique Hypothesis}
In analogy to the Exponential Time Hypothesis for 3-SAT, we propose the following hypothesis, which postulates that the state-of-the-art algorithms for the Planted Clique Problem are optimal up to a constant factor in the exponent. 
A {\em Strong} Planted Clique Hypothesis, in analogy with SETH, would specify a precise constant in the exponent---our hypothesis is merely Strong-ish.
We let $\cG(n,p)$ denote the Erd\H{o}s-R\'{e}nyi distribution with parameter $p$, and $\cG(n,p,\kappa)$ denote the Erd\H{o}s-R\'{e}nyi distribution with a planted $\kappa$-clique. 

\begin{hypothesis}[Strongish Planted Clique Hypothesis (SPCH)] \label{hyp:strong-planted-clique}
There exists a constant $\delta \in (0,\tfrac{1}{2})$ such that no $n^{o(\log n)}$-time algorithm $\cA$ satisfies both of the following:
\begin{itemize}
\item (Completeness) $\Pr_{G \sim \cG(n, \frac{1}{2}, \lceil n^{\delta} \rceil)}[\cA(G) = 1] \geq 2/3$.
\item (Soundness) $\Pr_{G \sim \cG(n, \frac{1}{2})}[\cA(G) = 1] \leq 1/3$.
\end{itemize}
\end{hypothesis}
In addition to the lack of algorithmic progress toward refuting this hypothesis, we note that $n^{\Theta(\log(n))}$ is in fact provably optimal for the Sum-of-Squares hierarchy~\cite{BarakHKKMP19}, which captures the state-of-the-art algorithmic techniques for a number of average-case problems. It is also known to be tight for statistical algorithms~\cite{FGRVX17,brennan2020statistical}.

\subsection{Our Contributions: Hardness results from Strongish Planted Clique Hypothesis}

Our main technical contributions are in exploring the implications of our new SPCH in parameterized complexity.
We prove two types of hardness results: hardness of approximation, and hardness of (exact) graph pattern detection. Due to the rich literature for each problem we consider, we will only mention the most relevant results here and defer a more comprehensive discussion to Section~\ref{sec:other-prior-work}.

\subsubsection{Hardness of approximation from SPCH}

At the heart of our work is the study of the \emph{Densest $k$-Subgraph} problem, in which we are given an undirected graph $G = (V, E)$ and a positive integer $k$. The goal is to output a subset $S \subseteq V$ of $k$ vertices that induces as many edges as possible. There is a trivial $O(k)$-approximation for the problem: return an arbitrary set of $\lfloor k / 2 \rfloor$ edges. We show that assuming SPCH, this algorithm is optimal:
\begin{theorem}
\torestate{
\label{thm:dks-inapprox}
Assuming the Strongish Planted Clique Hypothesis (Hypothesis~\ref{hyp:strong-planted-clique}), there is no $f(k) \cdot \poly(n)$-time algorithm that can approximate Densest $k$-Subgraph on $n$-vertex graphs to within a factor $o(k)$ for any function $f$. Furthermore, this holds even in the \emph{perfect completeness} case where the input graph is promised to contain a $k$-clique.
}
\end{theorem}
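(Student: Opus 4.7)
The plan is to reduce the Planted Clique distinguishing problem---whose hardness at clique size $\kappa = \lceil n^{\delta} \rceil$ is postulated by SPCH---to the gap version of Densest $k$-Subgraph. Given $G$ drawn from either $\cG(n, 1/2, \kappa)$ or $\cG(n, 1/2)$, I would construct a graph $H$ with a parameter $k$ such that in the planted case $H$ contains a $k$-clique, and in the Erd\H{o}s--R\'enyi case every $k$-vertex subset of $V(H)$ spans at most $o(k)$ edges. A hypothetical $f(k)\cdot\mathrm{poly}(n)$-time $o(k)$-approximation algorithm for Densest $k$-Subgraph, run on $H$, would distinguish these two cases; tuning parameters so that the combined running time is $n^{o(\log n)}$ then contradicts SPCH. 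The $k$-clique in the completeness case directly supplies the perfect-completeness strengthening.

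The natural candidate for $H$ is a ``clique-tuple lift'' of $G$: fix a parameter $\ell$; let the vertices of $H$ be the $\ell$-cliques of $G$, and declare two vertices $T_1, T_2$ adjacent iff $T_1 \cup T_2$ is also a clique in $G$. In the planted case, the $\binom{\kappa}{\ell}$ $\ell$-subsets of the planted clique $K$ pairwise satisfy this condition, so they induce a clique of size $k := \binom{\kappa}{\ell}$ in $H$. In the Erd\H{o}s--R\'enyi case, two disjoint $\ell$-subsets are adjacent with probability $2^{-\ell^2}$, making $H$ very sparse.

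The \emph{main obstacle} is the soundness analysis: one must show that with high probability over $G \sim \cG(n, 1/2)$, every $k$-subset of $V(H)$ spans only $o(k)$ edges. For a fixed subset this is a Chernoff bound, but a naive union bound over all $k$-subsets is dangerous because the edges of $H$ are correlated (they depend on overlapping edges of $G$). I would stratify subsets of $V(H)$ by the intersection pattern of their underlying $\ell$-tuples, apply a Janson-type concentration bound within each stratum, and then sum; the target is to confirm that choosing $\ell$ with $k\cdot 2^{-\ell^2} \to 0$ uniformly drives the $k$-subset edge counts to $o(k)$.

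Finally, the parameters $\ell$ and $k$ must be jointly chosen so that $\ell^2 > \log k$ (for soundness) while $f(k)\cdot\mathrm{poly}(|V(H)|)$ stays within $n^{o(\log n)}$ (for the SPCH contradiction). For any given $f$, I would pick $\ell = \ell(n)$ growing slowly enough that $f(k)$ remains subquasipolynomial, and correspondingly $k$ is the largest value for which the soundness bound still holds. This balancing act is the secondary obstacle of the proof, and it is what makes \emph{SPCH}---rather than merely the classical Planted Clique Hypothesis---the natural assumption for the $o(k)$-hardness of Densest $k$-Subgraph.
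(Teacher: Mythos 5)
Your high-level plan (a graph product, completeness from subsets of the planted clique, soundness via a union bound, then parameter balancing against SPCH) is the right template, but the concrete construction you propose breaks at exactly the step you flag as the main obstacle, and in a way that cannot be repaired by better concentration bounds. Taking the vertices of $H$ to be \emph{all} $\ell$-cliques of $G$ makes the soundness claim false for the parameter regime the theorem needs. With high probability $G \sim \cG(n,\tfrac12)$ contains a clique $C$ of size about $2\log_2 n$, and any two $\ell$-subsets of $C$ have union inside $C$, hence are adjacent in $H$; so $H$ contains a clique of size $\binom{|C|}{\ell}$, which dwarfs any slowly growing $k$ for essentially every admissible $\ell$. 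Your heuristic that disjoint tuples are adjacent with probability $2^{-\ell^2}$ ignores precisely these heavily overlapping tuples, and no Janson-type stratification can prove a statement that is false. The parameterization is also off: you set $k=\binom{\kappa}{\ell}\ge n^{\delta}$, i.e.\ $k$ polynomial in $n$, so a hypothetical $f(k)\cdot\mathrm{poly}$-time algorithm gives no bound of the form $n^{o(\log n)}$ for arbitrary $f$, and no contradiction with SPCH against FPT-in-$k$ algorithms; the theorem requires $k$ to be an arbitrarily slowly growing function of $n$ (and the DkS instance to have only $n^{o(\log n)}$ vertices, whereas the planted clique alone contributes $n^{\Theta(\delta\ell)}$ $\ell$-cliques to your $H$). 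Finally, the soundness target ``every $k$-subset spans $o(k)$ edges'' is stronger than what the gap requires (and than what is true); one only needs fewer than $\binom{k}{2}/g(k)$ edges for the assumed ratio $g(k)=o(k)$.

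The paper's fix is to use the \emph{randomized} graph product: sample only $N=\lceil 100k\,n^{(1-\delta)\ell}\rceil$ random $\ell$-subsets (not cliques) as vertices, with $\ell=\Theta(g(k)\log n/(\delta^2 k))$, so the instance size is $n^{o(\log n)}$ and $\mathrm{poly}(N)$ stays within the SPCH budget. Random sampling also supplies the missing structural ingredient, a disperser property (Lemma~\ref{lem:disperser}): with high probability any $\le k$ of the sampled sets have union of size $\Omega(\delta k\ell)$, so the overlapping-cluster pathology above cannot occur. Soundness is then proved by showing that any $k'$-vertex subgraph of $G'$ with minimum degree $d$ forces at least $\Omega(\delta^2 k' d\ell^2)$ distinct edges in $G$, an event of probability $2^{-\Omega(\delta^2 k' d\ell^2)}$, which beats the union bound over the at most $N^{k'}2^{(k')^2}$ candidate subgraphs once $d=\Theta(\log n/(\delta^2\ell))\le k/(10g(k))$; combined with the completeness clique of size $k$ (obtained because about $k$ of the $N$ random sets land inside the planted clique), this yields the $o(k)$-factor hardness. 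If you replace your ``all $\ell$-cliques'' lift with this bounded random sampling and the disperser-based counting, your outline becomes the paper's proof.
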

\noindent Theorem~\ref{thm:dks-inapprox} improves upon the inapproximability ratio of $k^{o(1)}$ shown in~\cite{ChalermsookCKLM17} under Gap-ETH.

The approximability of Densest $k$-Subgraph is known to be intimately related to that of numerous other problems. As such, our tight hardness of approximation for Densest $k$-Subgraph immediately implies several tight approximability results as corollaries, which we list below.

\begin{itemize}

\item {\em Smallest $k$-Edge Subgraph}: given an undirected graph $G = (V, E)$ and a positive integer $k$, find a smallest subset $S \subseteq V$ that induces at least $k$ edges. For this problem, the trivial solution that chooses $k$ edges arbitrarily is an $O(\sqrt{k})$-approximation since even the optimum requires at least $\sqrt{k}$ vertices. We show that this is tight (Corollary~\ref{cor:smallest-subgraph}): no fixed-parameter tractable (FPT) (in $k$) algorithm can achieve $o(\sqrt{k})$ approximation ratio.

\item {\em Steiner $k$-Forest} (aka {\em $k$-Forest}): given an edge-weighted undirected graph $G = (V, E)$, a set $\{(s_1, t_1), \dots, (s_\ell, t_\ell)\}$ of demand pairs and a positive integer $k$, the goal is to find a (not necessarily induced) subgraph of $G$ with smallest total edge weight that connects at least $k$ demand pairs. In Corollary~\ref{cor:k-forest}, we show that no FPT (in $k$) algorithm can achieve $o(\sqrt{k})$ approximation ratio. This matches the $O(\sqrt{k})$-approximation algorithm by Gupta et al.~\cite{GuptaHNR10}.

\item {\em Directed Steiner Network} (aka {\em Directed Steiner Forest}): given an edge-weighted directed graph $G = (V, E)$ and a set $\{(s_1, t_1), \dots, (s_k, t_k)\}$ of $k$ demand pairs, the goal is to find a (not necessarily induced) subgraph of $G$ with smallest total edge weight in which $t_i$ is reachable from $s_i$ for all $i = 1, \dots, k$. We prove that no FPT (in $k$) algorithm achieves $o(\sqrt{k})$-approximation (Corollary~\ref{cor:dsn}). This nearly matches the best known polynomial algorithms by Chekuri et al.~\cite{ChekuriEGS11} and Feldman et al.~\cite{FeldmanKN12}, both of which achieve a $O(k^{1/2 + \eps})$-approximation for any constant $\eps > 0$. Our bound improves upon a $k^{1/4 - o(1)}$ ratio from~\cite{DinurM18} under Gap-ETH.

\item {\em Densest $k$-Subhypergraph}: given a hypergraph $G = (V, E)$ and a positive integer $k$, output a $k$-size subset $S \subseteq V$ that maximizes the number of hyperedges fully contained in $S$. The trivial algorithm that outputs any hyperedge (of size at most $k$) obtains a $2^k$-approximation. We prove a matching lower bound (Theorem~\ref{thm:dksh}): no $2^{o(k)}$-approximation FPT (in $k$) algorithm exists. This resolves an open question posed by Cygan et al.~\cite{cygan-dksh}, assuming SPCH.

\end{itemize}

\subsubsection{Hardness of graph pattern detection from SPCH}
{\em Graph Pattern Detection}, also known as {\em Subgraph Isomorphism} and closely related to {\em Motif Discovery}, is a fundamental problem in graph algorithms: Given a host graph $G$ and a pattern graph $H$, decide whether $G$ contains a subgraph $S$ isomorphic to  $H$. There are two main variants of this problem, where $S$ is either required to be {\em induced} or {\em not-necessarily-induced}. For both variants, there is a brute-force $n^{O(k)}$-time algorithm. 
We prove matching SPCH-hardness for both variants. In addition to beating the ETH-based state-of-the-art for these results, we highlight that our reductions to graph pattern detection problems are extremely simple (in contrast to the prior work).

For induced subgraph detection, we prove the following: 
\begin{theorem}\torestate{ \label{thm:induced-pattern}
Assuming the Strongish Planted Clique Hypothesis (Hypothesis~\ref{hyp:strong-planted-clique}), for every $k$-node pattern $H$, there is no algorithm that solves the induced pattern detection problem on $n$-vertex graphs in time $f(k)\cdot n^{o(k)}$ for any function $f$.}
\end{theorem}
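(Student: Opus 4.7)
The plan is to combine a subsampling padding argument with a randomized, pattern-preserving reduction from planted clique to induced-$H$-detection. The key obstacle is that a random graph on $m$ vertices contains many induced copies of any fixed pattern on $k\ll\log m$ vertices, so a naive reduction would produce spurious $H$-copies in the soundness case; I overcome this via a random coloring plus an XNOR step that keeps the output graph $G'$ identically distributed as $\cG(m,\tfrac12)$, while planting an induced copy of $H$ whenever the clique receives distinct colors.

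\textbf{Padding.} I first argue that SPCH implies hardness of distinguishing $\cG(m,\tfrac12)$ from $\cG(m,\tfrac12,q)$ for $q = C\log m$ and any constant $C>2$. Given an SPCH input graph on $n$ vertices, uniformly subsample $m = \Theta(n^{1-\delta}\log n)$ vertices. By hypergeometric concentration, in the planted case the subsample contains at least $q$ clique vertices with high probability, while in the soundness case it is distributed as $\cG(m,\tfrac12)$. Any $m^{o(\log m)}$-time distinguisher for the subsampled instance thus yields an $n^{o(\log n)}$-time distinguisher for SPCH, contradicting the hypothesis.

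\textbf{Reduction.} Fix the target pattern $H$ on $k=q$ vertices, and let $\tilde G$ denote the subsampled graph. Draw a uniform coloring $c:V(\tilde G)\to[k]$ and build $G'$ on the same vertex set as follows: for each pair $u\neq v$ with $c(u)\neq c(v)$, set $\{u,v\}\in E(G')$ iff the propositions ``$\{c(u),c(v)\}\in E(H)$'' and ``$\{u,v\}\in E(\tilde G)$'' have the same truth value; for each pair with $c(u)=c(v)$, set $\{u,v\}\in E(G')$ iff $\{u,v\}\in E(\tilde G)$. If $\tilde G$ contains a $k$-clique $\{u_1,\dots,u_k\}$, these vertices get distinct colors with probability $k!/k^k$, and on this event (after relabeling so that $c(u_i)=i$) the XNOR rule forces $\{u_i,u_j\}\in E(G')$ iff $\{i,j\}\in E(H)$, so $\{u_1,\dots,u_k\}$ induces $H$ in $G'$. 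If instead $\tilde G\sim\cG(m,\tfrac12)$, then for every fixed $c$ each edge of $G'$ is an independent $\mathrm{Bern}(\tfrac12)$ (XNOR with a fixed bit preserves the Bernoulli-$\tfrac12$ distribution), so $G'\sim\cG(m,\tfrac12)$; for $C>2$, the expected number of induced copies of $H$ in $\cG(m,\tfrac12)$ is $\binom{m}{k}\cdot k!\cdot 2^{-\binom{k}{2}}/|\mathrm{Aut}(H)| = m^{-\omega(1)}$, so $G'$ has no induced $H$ with high probability.

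\textbf{Combining and main obstacle.} To distinguish SPCH, repeat the reduction $R = \Theta(k^k/k!\cdot\log n) = \mathrm{poly}(n)$ times with independent colorings, calling the hypothetical $f(k)\cdot N^{o(k)}$-time induced-$H$-detection algorithm on each $G'$ and outputting ``planted'' iff at least one call returns yes. Completeness $\ge 2/3$ follows from independence of the repetitions (each succeeds with probability $k!/k^k$), while soundness $\ge 2/3$ follows from a union bound over the $R$ repetitions against the per-repetition bound $m^{-\omega(1)}$. The total running time is $\mathrm{poly}(n)\cdot f(C\log m)\cdot m^{o(\log m)} = n^{o(\log n)}$, contradicting SPCH. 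The most delicate point---and the reason the construction handles \emph{every} pattern $H$ uniformly---is the exact Erd\H{o}s--R\'enyi preservation under XNOR in the soundness analysis: it sidesteps the pattern-dependent spurious-copy problems (notably those for patterns with large independent sets) that derail simpler layered-product reductions.
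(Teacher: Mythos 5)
Your reduction (subsample, random coloring, XNOR against the pattern) is internally sound as a reduction: the completeness event and the exact $\cG(m,\tfrac12)$ preservation in the soundness case are both correct, and the expected-copies calculation for $k=C\log_2 m$, $C>2$, is right. The genuine gap is in the final running-time accounting for an \emph{arbitrary} function $f$. Your construction forces the pattern size to be $k=\Theta(\log n)$ and gives you no freedom to make it smaller: soundness needs $\cG(m,\tfrac12)$ to contain \emph{no} induced copy of $H$ at all, which requires $k\ge (2+\Omega(1))\log_2 m$, while completeness (capturing $k$ planted-clique vertices in the subsample) forces $m\gtrsim k\,n^{1-\delta}$, hence $k=\Omega(\log n)$. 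Consequently the claimed identity $\poly(n)\cdot f(C\log m)\cdot m^{o(\log m)}=n^{o(\log n)}$ is false for general $f$: with, say, $f(k)=2^{2^k}$ one gets $f(C\log_2 m)=2^{m^{C}}$, which dwarfs the $n^{o(\log n)}$ budget that SPCH allows, so no contradiction is reached. The theorem quantifies over every $f$, and the only standard way to neutralize $f$ in such statements is the liberty to let $k$ grow arbitrarily slowly with $n$ --- a liberty your scheme structurally lacks. As written, your argument proves the weaker statement that there is no algorithm running in time $f(k)\cdot n^{o(k)}$ with $f(k)\le 2^{o(k^2)}$ (equivalently, an $n^{\Omega(k)}$ bound against algorithms whose $k$-dependence is at most $2^{o(k^2)}$), not the theorem as stated.

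This is precisely where the paper routes differently: it first establishes an $n^{\Omega(k)}$ lower bound for $O(1)$-approximating Densest $k$-Subgraph (Theorem~\ref{thm:dks-time}) via the randomized graph product, whose soundness guarantee (no $\tfrac{k}{4}$-dense $k$-subgraph) holds already for $k$ an arbitrarily slowly growing function of $n$; that freedom is what tames $f(k)$. The pattern step is then a coloring-and-edge-deletion reduction applied after ensuring $H$ has average degree at least $k/2$ (complementing $H$ otherwise, which is harmless for induced detection), with soundness inherited from the density gap rather than from $H$-freeness of a truly random graph. Your XNOR trick is elegant --- it handles every pattern uniformly, with no complementation and no density argument --- but because its soundness is tied to the threshold $k\approx 2\log_2 m$, it cannot deliver the ``for any $f$'' form of the theorem without importing some gap-type soundness (e.g., the graph-product lemma) that works at much smaller $k$. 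A minor additional note: your subsampled planted instance is not literally $\cG(m,\tfrac12,q)$ (the surviving clique size is random), but since you only use that it contains a $k$-clique w.h.p., that point is harmless.
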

\noindent Our $n^{\Omega(k)}$ lower bound for all patterns can be compared to recent work of~\cite{DVVW19}, who proved: (i) $n^{\Omega(\log(k))}$ lower bound for every pattern assuming ETH; (ii) $n^{\Omega(\sqrt{k})}$ lower bound for every pattern assuming ETH and the Hadwiger conjecture; and  (iii) $n^{\Omega(k/\log(k))}$ lower bound for most patterns.%
\footnote{In the sense of a pattern sampled randomly from $\cG(k,1/2)$.}

For  not-necessarily-induced subgraph detection, it is no longer true that every pattern is hard (e.g.~it is trivial to find a not-necessarily-induced subgraph isomorphic to an independent set). But we prove (Corollary~\ref{cor:not-induced}) that for most patterns%
\footnote{in particular any pattern with a constant fraction of the $k\choose{2}$ possible edges.} detection requires $n^{\Omega(k)}$ time assuming SPCH. For comparison,~\cite{DVVW19} proved that under ETH, not-necessarily-induced subgraph detection requires $n^{\Omega (\omega(H))}$ time, where $\omega(H)$ denotes the clique number of the pattern $H$. (Note that $\omega(H) = \Theta(\log k)$ for most patterns.)

\paragraph{$k$-biclique detection}
For the special case where the pattern is a $k$-biclique, our aforementioned $n^{\Omega(k)}$ hardness for non-induced subgraph detection rules out even constant factor approximations (Corollary~\ref{cor:biclique}). This improves over $n^{\Omega(\sqrt{k})}$ lower bounds under ETH for the exact case~\cite{Lin15} or Gap-ETH for approximation~\cite{ChalermsookCKLM17}.

\paragraph{Densest $k$-Subgraph}
We obtain our pattern detection result by first showing a $n^{\Omega(k)}$ running time bound for $O(1)$-approximating Densest $k$-Subgraph (Theorem~\ref{thm:dks-time}).
This improves upon the previous lower bound who give a running time lower bound of $n^{\Omega(\log k)}$ assuming Gap-ETH~\cite{ChalermsookCKLM17}. The aforementioned lower bounds for pattern detection follow almost trivially from our running time lower bound for Densest $k$-Subgraph; see Sections~\ref{subsec:biclique} and~\ref{subsec:pattern-detection} for more detail.

\subsection{Techniques}

The starting point for all of our reductions is a {\em randomized graph product}: starting with an instance $G$ of the planted clique problem on $n$ vertices and any integers $\ell \le n$ and $N$, we produce a graph $G'$ by taking its vertices to be $N$ randomly sampled subsets $S_1,\ldots,S_N$ of $\ell$ vertices each, and we add an edge on $S_i,S_j$ if and only if their union induces a clique in $G$.

The randomized graph product~\cite{BermanS89} (and its derandomized variant~\cite{AlonFWZ95}) has a long history in proving hardness of approximating Maximum Clique. While not stated explicitly, it was also used to prove parameterized inapproximability of Densest $k$-Subgraph in~\cite{ChalermsookCKLM17}. As we will explain in more detail below, the main difference between our proof and previous works lie in the soundness, where we appeal to the fact that $G \sim \cG(n, 1/2)$ to achieve tighter bounds. 

Since we would like to prove hardness of approximating Densest $k$-Subgraph in the perfect completeness case, our goal is to show (for appropriately chosen values of $N, \ell$) that
\begin{itemize}
\item (Completeness) if $G$ contains a large clique, then with high probability so does $G'$, and,
\item (Soundness) if $G$ is random, then $G'$ does not have small dense subgraphs (with high probability).
\end{itemize}

For the completeness, if the starting graph $G$ has a $\kappa$-clique, then the set of $S_i$ that fall entirely within the $\kappa$-clique will form a clique in $G'$ (the expected size is $(\frac{\kappa}{n})^\ell\cdot N$). This part of the proof is exactly the same as that in the aforementioned previous works.

To prove soundness, we calculate the probability that a specific $\gamma k$-dense $k$-subgraph appears in $G'$, then take a union bound over all $\le \binom{N}{k}\cdot 2^{\binom{k}{2}}$ possible such subgraphs.
Our simple argument hinges on showing that a $k$-subgraph with $\gamma k^2$ edges in $G'$ induces (with high probability) at least $\Omega(\gamma k^2 \ell^2)$ edges in $G$, and since any such set of edges appears in $G$ with probability at most $2^{-\Omega(\gamma k^2 \ell^2)}$, by choosing $\ell$ sufficiently large we can beat this union bound. 
To argue that small subgraphs with $m$ edges in $G'$ induce small subgraphs with $\Omega(m \ell^2)$ edges in $G$, we use that the randomly chosen $S_i$ (for an appropriate choice of $N \ll n^\ell$ and $k_*$ sufficiently small) form a {\em disperser}: the union of {\em any} $t \le k_*$ of the $S_i$ contains $\Omega(t\ell)$ vertices of $G$ with high probability.\footnote{The fact that the randomized graph product yields a disperser has been used in previous works as well, see e.g. \cite{BermanS89,zuckerman1996unapproximable,ChalermsookCKLM17}. }
This implies that for $k \le k_*$, each $k$-subgraph of $G'$ corresponds to a union of $k$ {\em pairwise} mostly-non-overlapping subsets of $\ell$ vertices.
Now, since each edge between mostly non-overlapping sets in $G'$ corresponds to a $\Omega(\ell)$-clique in $G$, this in turn can be used to show that any $k$-subgraph of $G'$ with density $\gamma k$ corresponds to a subgraph of $G$ with $\Omega(\gamma k^2 \ell^2)$ edges. 
In this way we rule out the existence of $\gamma k$-dense $k$-subgraphs in $G'$ (with high probability).

By carefully choosing the parameters $N$, $\ell$, $\gamma$, to control the completeness, soundness, and reduction size, we get a fine-grained reduction from Planted $\kappa$-Clique to Densest $k$-Subgraph.
Our results for other problems are obtained via direct reductions from the Densest $k$-Subgraph problem.

We end by stressing that our new soundness proof gives a strong quantitative improvement over prior results, which is what enables us to achieve $k^{\Omega(1)}$ inapproximability. 
Specifically, the previous soundness proof from~\cite{ChalermsookCKLM17}---in turn adapted from~\cite{Manurangsi17}---relies on showing that the graph is $t$-biclique-free for some $t \in \N$; they then apply the so-called Kovari-Sos-Turan theorem~\cite{kHovari1954problem} to deduce that any $k$-subgraph contains at most $O(k^{2 - 1/t})$ edges. Notice that this  gap is only $O(k^{1/t})$, and $t$ cannot be a constant as otherwise the completeness and soundness case can be distinguished in time $n^{O(t)} = \poly(n)$. As a result, their technique cannot yield an $k^{\Omega(1)}$-factor inapproximability for Densest $k$-Subgraph.
Similarly, the techniques from~\cite{ChalermsookCKLM17} {cannot} give a running time lower bound of the form $n^{\omega(\log k)}$ for $O(1)$-approximation of Densest $k$-Subgraph.
The reason is that, to get a constant gap bounded from one, they need to select $t = O(\log k)$. Once again, this is in contrast to our technique which yields a tight running time lower bound of $n^{\Omega(k)}$ in this setting, although our proof requires a different starting hardness result (from SPCH).

\subsection{Discussion and Open Questions}

In this work, we have proposed the Strongish Planted Clique Hypothesis, and used it prove several tight hardness of approximation or running time lower bound results. One direction for future investigation is to use SPCH to derive other interesting lower bounds. 

Another intriguing question directly related to our hardness of approximation results is whether we can strengthen the inapproximability factors from $k^{\Omega(1)}$ to $n^{\Omega(1)}$. Whether it is hard to approximate Densest $k$-Subgraph to within a $n^{\Omega(1)}$ factor is a well-known open problem and is related to some other conjectures, such as the Sliding-Scaling Conjecture~\cite{BellareGLR93}\footnote{Specifically, from a reduction of~\cite{CharikarHK11}, $n^{\Omega(1)}$-factor hardness of approximation of Densest $k$-Subgraph also implies that of the Label Cover problem.}. As such, it would be interesting if one can prove this hardness under some plausible assumption. We remark that an attempt has been made in this direction, using the so-called ``Log-Density Threshold'' approach~\cite{BCCFV10}, which posits a heuristic for predicting which average-case Densest-$k$-Subgraph problems are hard.

The approach has also been applied to other related questions~\cite{ChlamtacDK12,ChlamtacDM17,ChlamtacMMV17}. Nonetheless, there is still little evidence that these average-case DkS problems are indeed hard; not even lower bounds against strong SDP relaxations are known, although there are some matching lower bounds against LP hierarchies~\cite{BhaskaraCVGZ12,ChlamtacM18}. 

Finally, it is of course interesting to either refute or find more evidence supporting the Strongish Planted Clique Hypothesis. As stated earlier, the current best supporting evidence is the Sum-of-Squares lower bound from~\cite{BarakHKKMP19}. Can such a lower bound be extended to, e.g., rule out any semi-definite programs of size $n^{o(\log n)}$ (\`a la~\cite{LeeRS15} for CSPs)?

\subsection{Other Related Works}
\label{sec:other-prior-work}
Historically, postulating hardness for average-case problems has been helpful in illuminating the landscape for hardness of approximation, beginning with Feige's seminal random-$3$-SAT hypothesis~\cite{Feige02} and its numerous consequences (e.g.~\cite{DanielyS16}). See also \cite{Daniely16,alekhnovich2011more,applebaum2010public,barak2013optimality}.

As discussed briefly above, the Planted Clique Hypothesis (PCH), which states that there is no polynomial-time algorithm for planted clique, has many known consequences for hardness of approximation.
We draw attention in particular to the work of~\cite{alon2011inapproximability}, which also obtains hardness of approximation results based on PCH.
But even assuming the SPCH, their results can only rule out $n^{\polylog(\kappa)}$-time algorithms for $2^{\log (\kappa)^{2/3}}$-approximating densest-$\kappa$-subgraph for $\kappa = n^{\Omega(1)}$.
Their reduction also uses a graph product, but 
the set of vertices of their new graph $G'$ contains all $\ell$-size subsets of vertices of $G$. In contrast, we employ the {\em randomized} graph product, where we only randomly pick some $\ell$-size subsets---this allows us to better control the instance size blowup, which turns out to be crucial for obtaining our tight inapproximability and running time results.

\medskip
Below we discuss in more detail the previous works for each of the problems studied here.

\paragraph{Densest $k$-Subgraph.} 
The problem is well-studied in the approximation algorithms literature (e.g.~\cite{FPK01,KP93,BCCFV10}). The best known polynomial time algorithm~\cite{BCCFV10} achieves an approximation ratio of $O(n^{1/4 + \eps})$ for any $\eps > 0$.
Even though the NP-hardness of Densest $k$-Subgraph follows immediately from that of $k$-Clique~\cite{Karp72}, no NP-hardness of approximation of Densest $k$-Subgraph even for a small factor of 1.001 is known. Nonetheless, several hardness of approximation results are known under stronger assumptions~\cite{Feige02,Khot06,RaghavendraS10,alon2011inapproximability,barak2013optimality,BravermanKRW17,Manurangsi17}. Among these, only~\cite{alon2011inapproximability,barak2013optimality} and~\cite{Manurangsi17} yield super-constant inapproximability ratios. Specifically,~\cite{alon2011inapproximability} rules out $2^{O((\log n)^{2/3})}$-approximation in polynomial time under a hypothesis similar to SPCH, \cite{Manurangsi17} rules out $n^{o(1 / \poly\log\log n)}$-approximation under ETH, and
\cite{barak2013optimality} rules out $n^{O(1)}$ approximations under a strong conjecture regarding the optimality of semidefinite programs for solving every random CSP.

Our hardness result holds even in the so-called \emph{perfect completeness} case, where we are promised that the graph $G$ contains a $k$-clique. In this case, a quasi-polynomial time approximation scheme exists~\cite{FS97}. Braverman et al.~\cite{BravermanKRW17} showed that this is tight: there exists a constant $\eps > 0$ for which $(1 + \eps)$-approximation of Densest $k$-Subgraph in the perfect completeness case requires $n^{\tilde{\Omega}(\log n)}$-time assuming ETH. We remark that the hardness from~\cite{Manurangsi17} also applies in the perfect completeness case, but it only rules out polynomial time algorithms.

For the parameterized version of the problem, its W[1]-hardness follows immediately from that of $k$-Clique~\cite{DowneyF95-w1}. Chalermsook et al.~\cite{ChalermsookCKLM17} showed that no $k^{o(1)}$-approximation is achievable in FPT time, unless Gap-ETH is false. This hardness also holds in the perfect completeness case, and yields a running time lower bound of $n^{\Omega(\log k)}$ for any constant factor approximation.

\paragraph{$k$-Biclique.} Similar to Densest $k$-Subgraph, while the NP-hardness for the exact version of $k$-Biclique has long been known (e.g.~\cite{Joh87}), even 1.001-factor NP-hardness of approximation has not yet been proved although inapproximability results under stronger assumptions are known~\cite{Feige02,Khot06,BhangaleGHKK16,Man17-ICALP}. Specifically, under strengthened variants of the Unique Games Conjecture, the problem is hard to approximate to within a factor of $n^{1 - \eps}$ for any $\eps > 0$~\cite{BhangaleGHKK16,Man17-ICALP}.

On the parameterized complexity front, whether $k$-Biclique is FPT (in $k$) was a well-known open question (see e.g.~\cite{DowneyF13}). This was resolved by Lin~\cite{Lin15} who showed that the problem is W[1]-hard and further provided a running time lower bound of $n^{\Omega(\sqrt{k})}$ under ETH. As stated above, this running time lower bound was extended to rule out any constant factor approximation in~\cite{ChalermsookCKLM17} under Gap-ETH. Furthermore, \cite{ChalermsookCKLM17} showed that no $o(k)$-approximation exists in FPT time.

\paragraph{Smallest $k$-Edge Subgraph.} Most of the hardness of approximation for Densest $k$-Subgraph easily translates to Smallest $k$-Edge Subgraph as well, with a polynomial loss in the factor of approximation. For example, the hardness from~\cite{Manurangsi17} implies that Smallest $k$-Edge Subgraph cannot be approximated to within a factor of $n^{1/\poly\log\log n}$ in polynomial time, assuming ETH. On the other hand, Chlamtac et al.~\cite{ChlamtacDK12} devised an $O(n^{3 - 2\sqrt{2} + \eps})$-approximation algorithm for any constant $\eps > 0$ for the problem; this remains the best known approximation algorithm for the problem.

\paragraph{Densest $k$-Subhypergraph.} Apart from the hardness results inherited from Densest $k$-Subgraph, not much is known about Densest $k$-Subhypergraph. Specifically, the only new hardness is that of Applebaum~\cite{Applebaum13}, who showed that the problem is hard to approximate to within $n^{\eps}$ for some constant $\eps > 0$, assuming a certain cryptographic assumption; this holds even when each hyperedge has a constant size. On the other hand, the only (non-trivial) approximation algorithm is that of Chlamtac et al.~\cite{ChlamtacDKKR18} which achieves $O(n^{0.698})$-approximation when the hypergraph is 3-uniform.

\paragraph{Steiner $k$-Forest.} The Steiner $k$-Forest problem is a generalization of several well-known problems, including the Steiner Tree problem and the $k$-Minimum Spanning Tree problem. This problem was first explicitly defined in~\cite{HajiaghayiJ06} and subsequently studied in~\cite{SegevS10,GuptaHNR10,DinitzKN17}. In terms of the number of vertices $n$ of the input graph, the best known approximation ratio achievable in polynomial time is $O(\sqrt{n})$~\cite{GuptaHNR10} (assuming that $k \leq \poly(n)$); furthermore, when edge weights are uniform, a better approximation ratio of $O(n^{0.449})$ is achievable in polynomial time~\cite{DinitzKN17}. On the other hand, as stated earlier, in terms of $k$, the best known approximation ratio is $O(\sqrt{k})$~\cite{GuptaHNR10}.

A reduction in~\cite{HajiaghayiJ06} together with W[1]-hardness of $k$-Clique~\cite{DowneyF95-w1} implies that Steiner $k$-Forest is W[1]-hard with respect to the parameter $k$. We are not aware of any further parameterized complexity study of this problem (with respect to parameter $k$).

\paragraph{Directed Steiner Network.} Several polynomial time approximation algorithms have been proposed for the Directed Steiner Network problem~\cite{CCCDGGL99,ChekuriEGS11,FeldmanKN12,BBMRY13,CDKL17,AbboudB18}; in terms of the number of vertices $n$ of the input graph the best known approximation ratio is $O(n^{2/3 + \eps})$~\cite{BBMRY13} and in terms of $k$ the best known ratio is $O(k^{1/2 + \eps})$ for any constant $\eps > 0$~\cite{ChekuriEGS11,FeldmanKN12}. On the hardness front, Dodis and Khanna~\cite{DodisK99} shows that the problem is quasi-NP-hard to approximate to within a factor of $2^{(\log n)^{1 - \eps}}$ for any constant $\eps > 0$.

Guo et al.~\cite{GuoNS11} show that the exact version of the problem is W[1]-hard with respect to parameter $k$. Later,~\cite{DinurM18} rules out even $k^{1/4 - o(1)}$-approximation in FPT time, under Gap-ETH.

\paragraph{Graph Pattern Detection.} 
As discussed earlier,~\cite{DVVW19} give ETH-based hardness results for graph pattern detection, both in the induced and non-induced case.
The complexity for many special patterns has also been considered, e.g.~$k$-cliques, $k$-bicliques (mentioned above), and $k$-cycles (e.g.~\cite{AYZ97,YZ04,DKS17,DVVW19,LV20}). A $k$-clique can be detected in time $O(n^{\lceil k/3\rceil \omega})$ using fast matrix multiplication~\cite{NP85}, and the $k$-Clique Conjecture in Fine-Grained Complexity postulates that this is essentially optimal~\cite{Williams18}. Any other pattern over $k$ vertices can be detected in time $O(n^{k-1}$), without using fast matrix multiplication~\cite{BKS18}. There is also an extensive body of work on {\em counting} the number of occurrences of a pattern in a host graph (e.g.~\cite{Milo02,KLL13,UBK13,WW13,CM14,CDM17}).

\subsection*{Preliminaries and Notation}
For a natural number $n \in \N$, we use $[n]$ to denote the set of integers up to $n$, $[n] = \{1,\ldots,n\}$.
We will use the abbreviation ``w.h.p.'' to mean ``with high probability.''

For an undirected graph $G = (V, E)$, we use $\deg_G(v)$ to denote the degree of a vertex $v \in V$, and $\mindeg(G)$ to denote $\min_{v \in V(G)} \deg_G(v)$.
For a subset $S \subseteq V$, we use $G[S] = (V, E[S])$ to denote the induced subgraph of $G$ on subset $S$.

 The \emph{density} of $G$, denoted by $\den(G)$, is defined as $|E| / |V|$. We also use $\den_{\leq k}(G)$ to denote $\max_{S \subseteq V, |S| \leq k} \den(G[S])$, the maximum density of subgraphs of $G$ of at most $k$ vertices.

\section{Randomized Graph Product}
In this section we formally define our reduction, and analyze its soundness and completeness in terms of the reduction parameters (in later sections we instantiate the parameters differently for each target bound).
Our reduction takes as input a graph and applies the following {\em randomized graph product}:

\begin{figure}[h!]
\begin{framed}
\textbf{Randomized Graph Product~\cite{BermanS89}} \\
\textbf{Input: } $n$-vertex Graph $G = (V, E)$, positive integers $N, \ell$. \\
\textbf{Output: } Graph $G' = (V', E')$. \\
The graph $G'$ is constructed as follows.
\begin{enumerate}
\item For each $i \in [N]$, sample $S_i \subseteq V$ by independently sampling $\ell$ vertices uniformly from $V$.
\item Let $V' = \{S_1, \dots, S_N\}$.
\item For every distinct $i, j \in [N]$, include $(S_i, S_j)$ in $E'$ if and only if $S_i \cup S_j$ induces a clique in $G$.
\end{enumerate}
\end{framed}
\label{fig:my_label}
\end{figure}
\noindent We use $\rgp_{N,\ell}(G)$ to denote the distribution of outputs of the above reduction on input graph $G$.

We will show that for well-chosen $N$ and $\ell$, the randomized graph product  $\rgp_{N,\ell}$ reduces the planted $n^{\delta}$-clique problem to densest-$k$ subgraph for $k = k(N,\ell,\delta,n)$. 
That is, a sample from $\rgp_{N,\ell} \circ \cG(n,\frac{1}{2})$ has no dense $k$-subgraph with probability close to $1$, and if on the other hand $G$ is a graph with an $n^{\delta}$-clique, then a sample from $\rgp_{N,\ell}(G)$ has a dense $k$-subgraphs with probability close to $1$.

\subsection{Completeness}
We first prove that applying the randomized graph product to a graph with a large clique results in a graph with a large clique.

\begin{lemma}[Completeness]\label{lem:completeness}
Suppose that $\delta \in (0, 1), N, \ell, k \in \N$ are such that $N \geq 10 k \cdot n^{(1 - \delta)\ell}$ and $k \ge 20$. If $G$ contains a clique of size $\lceil n^{\delta} \rceil$, then
\[
\Pr_{G' \sim \rgp_{N,\ell}(G)}\left[\, G' \text{ contains a $k$-clique }\right]\ge 0.9.
\]
\end{lemma}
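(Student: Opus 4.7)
The plan is to show that with high probability, sufficiently many of the sampled subsets $S_i$ land entirely inside the planted $\lceil n^\delta \rceil$-clique $C \subseteq V(G)$, so that these $S_i$'s themselves form a $k$-clique in $G'$. The key observation is that whenever $S_i, S_j \subseteq C$ with $i \ne j$, the union $S_i \cup S_j$ lies in $C$ and hence induces a clique in $G$, so by construction $(S_i, S_j) \in E'$. Defining $\mathcal{I} := \{i \in [N] : S_i \subseteq C\}$, it follows that any $k$ elements of $\mathcal{I}$ give a $k$-clique in $G'$, and so it suffices to prove $\Pr[|\mathcal{I}| \geq k] \geq 0.9$.

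Next I would lower bound $\Pr[S_i \subseteq C]$. Since $S_i$ consists of $\ell$ vertices sampled independently and uniformly from $V$, one has $\Pr[S_i \subseteq C] \geq (|C|/n)^\ell \geq n^{-(1 - \delta)\ell}$. Combined with the hypothesis $N \geq 10 k \cdot n^{(1-\delta)\ell}$ and linearity of expectation, this yields $\E[|\mathcal{I}|] \geq 10 k$. Because the events $\{S_i \subseteq C\}_{i \in [N]}$ are mutually independent, a standard multiplicative Chernoff bound gives
\[
\Pr[\,|\mathcal{I}| < k\,] \;\leq\; \exp\!\left(-\tfrac{(0.9)^2 \cdot 10 k}{2}\right) \;\leq\; e^{-4k},
\]
which is comfortably below $0.1$ for any $k \geq 20$. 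Together with the observation above, this proves the lemma.

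The argument is essentially routine, so no serious obstacle is anticipated; the only thing to verify is that the constants in the Chernoff tail bound indeed deliver failure probability at most $0.1$ at the threshold $k \geq 20$, which is precisely why the hypothesis carries a factor of $10$ (rather than, say, $2$) in the lower bound on $N$. A minor wrinkle is that under a without-replacement interpretation of the sampling, the probability $\Pr[S_i \subseteq C]$ becomes $\binom{|C|}{\ell}/\binom{n}{\ell}$, which is slightly smaller than $(|C|/n)^\ell$; but this is still at least $((|C|-\ell)/n)^\ell$, and the additional constant-factor loss can be absorbed into the slack afforded by the $10$ in the hypothesis.
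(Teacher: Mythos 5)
Your proposal is correct and follows essentially the same argument as the paper: lower-bound $\Pr[S_i \subseteq C]$ by $n^{-(1-\delta)\ell}$, use $N \geq 10k\, n^{(1-\delta)\ell}$ to get expected count at least $10k$, apply a Chernoff bound to get at least $k$ subsets inside $C$ with probability $1 - e^{-4k} \geq 0.9$, and note these form a clique in $G'$. (The without-replacement aside is moot, since the construction samples the $\ell$ vertices of each $S_i$ independently and uniformly, i.e.\ with replacement.)
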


\begin{proof}
Let $C \subseteq V$ be the $\lceil n^{\delta} \rceil$-size clique in $G$. For each $i \in [N]$, $\Pr[S_i \subseteq C] = \left(\frac{\lceil n^{\delta} \rceil}{n}\right)^\ell \geq n^{-(1 - \delta) \ell}$. 
By our lower bound on $N$, the expected number of indices $i \in [N]$ such that $S_i \subseteq C$ is at least $10 k$, and thus a Chernoff bound implies that with probability $1-\exp(-4k) \ge 0.9$, there exists at least $k$ indices $i \in [N]$ such that $S_i \subseteq C$.
By definition of the randomized graph product $\rgp_{N,\ell}$, these subsets form a clique in $G'$. 
\end{proof}

\subsection{Soundness}
\label{subsec:generic-soundness}
We now prove that if we apply the randomized graph product to a graph drawn from $G(n,\frac{1}{2})$, with high probability the resulting graph has no small subgraphs which are too dense.

\begin{lemma}[Soundness] \label{lem:soundness-generic}
Suppose that $\delta \in (0, 1), N, \ell, k \in \N$ are such that $N \leq 1000k \cdot n^{(1 - \delta) \ell}, \ell \geq k \geq 20$ and $n^{0.99\delta} \geq k\ell$. If $G$ is drawn from $\cG(n, \frac{1}{2})$, then
\begin{align*}
\Pr_{\substack{G \sim \cG(n,\frac{1}{2})\\G' \sim \rgp_{N,\ell}(G)}}\left[\, \den_{\leq k}(G') \leq \frac{10^7 \log n}{\ell \delta^2}\,\right] \ge 0.9.
\end{align*}
\end{lemma}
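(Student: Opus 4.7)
The plan is to establish the complementary bound $\Pr[\den_{\le k}(G') > D] \le 0.1$ for $D := 10^7 \log n / (\ell \delta^2)$, via a union bound. A $t$-vertex subset of density exceeding $D$ must contain at least $\lceil tD \rceil$ edges, so
\[
\Pr[\text{bad}] \;\le\; \sum_{t=2}^{k} \binom{N}{t}\binom{\binom{t}{2}}{\lceil tD \rceil} \cdot \max_{I, F} \Pr_{S, G}\bigl[F \subseteq E'[I]\bigr],
\]
where the maximum is over $I \subseteq [N]$ with $|I| = t$ and $F \subseteq \binom{I}{2}$ with $|F| = \lceil tD \rceil$, and the event $F \subseteq E'[I]$ is that for every $\{i,j\} \in F$ the set $S_i \cup S_j$ induces a clique in $G$.

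For a fixed $(I, F)$, I would condition on the subsets $\{S_i\}_{i \in I}$. Since $G \sim \cG(n, 1/2)$ is independent of $S$, $\Pr_G[F \subseteq E'[I] \mid S] = 2^{-|E_F(S)|}$, where $E_F(S) := \bigcup_{\{i,j\} \in F} \binom{S_i \cup S_j}{2}$ is the set of edges forced in $G$ by the event. Hence $\Pr_{S,G}[F \subseteq E'[I]] = \E_S[2^{-|E_F(S)|}]$, and the task reduces to proving a high-probability lower bound on $|E_F(S)|$.

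The key claim is that with high probability over $S$, $|E_F(S)| \ge c \cdot |F| \ell^2$ for some constant $c > 0$ depending on $\delta$. The plan is a \emph{cross-pair counting} argument: for each $\{i,j\} \in F$ with $|S_i \cap S_j| \le \ell/2$ (a typical pair by a hypergeometric tail bound, since the expected intersection size $\ell^2/n$ is at most $n^{-0.01}$ using $k\ell \le n^{0.99\delta}$), there are at least $\ell^2/2$ ordered pairs $(v, w)$ with $v \in S_i \setminus S_j$ and $w \in S_j \setminus S_i$, and each such $\{v,w\}$ lies in $E_F(S)$. Summing over $F$ yields at least $|F|\ell^2/2$ forced ordered pairs, counted with multiplicity. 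The multiplicity of a fixed pair $(v, w)$ is at most $\sigma_I(v) \cdot \sigma_I(w)$, where $\sigma_I(v) := |\{i \in I : v \in S_i\}|$. Consequently, if $\max_v \sigma_I(v) \le \Delta$, then $|E_F(S)| \ge |F|\ell^2/(4\Delta^2)$. A Chernoff bound, using $\sigma_I(v) \sim \Bin(t, \ell/n)$ with expectation at most $n^{-0.01}$, shows that $\max_v \sigma_I(v)$ is bounded by a mild $\Delta = O(1/\delta)$-type constant with probability close to one; a similar argument controls $|S_i \cap S_j|$ for the relevant pairs, and the residual failure probabilities are absorbed into the $0.1$ slack.

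Putting everything together, the exponent of the per-term union-bound estimate is of the form
\[
t(1-\delta)\ell \log n + tD\log(et/D) - \Omega(t D \ell^2 / \delta^2),
\]
and after substituting $D = 10^7\log n/(\ell\delta^2)$ the negative term is $\Omega(t\ell\log n/\delta^4)$, which, thanks to the generous constant $10^7$, absorbs both the $t(1-\delta)\ell \log n$ and $tD \log(et/D)$ prefactors and yields the desired $0.1$ bound. The main technical obstacle is managing the two-way tradeoff involving $\Delta$ (the maximum vertex multiplicity): we need $\Delta$ small enough that $|F|\ell^2/\Delta^2$ beats the combinatorial overhead, while simultaneously the event $\max_v \sigma_I(v) \le \Delta$ must hold for every one of the $\binom{N}{t}$ candidate subsets $I$; the hypothesis $n^{0.99\delta} \ge k\ell$ is exactly what makes this tradeoff feasible, since it forces $\sigma_I(v)$ to have very small mean, so that super-constant deviations are cheap enough to union-bound away.
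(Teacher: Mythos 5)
Your overall skeleton (union bound over vertex subsets and candidate forced edge sets, then lower-bounding the number of $G$-edges forced by a dense subgraph of $G'$) matches the paper, but the key probabilistic claim you rely on is false in this parameter regime, and this is a genuine gap. You need the overlap-control event --- $\max_v \sigma_I(v)\le\Delta$ for a mild $\Delta$, together with small pairwise intersections $|S_i\cap S_j|$ --- to hold \emph{simultaneously for all} $\binom{N}{t}$ candidate sets $I$, since only then can its failure probability be charged once against the $0.1$ slack. But $N=\Theta(k\,n^{(1-\delta)\ell})$ is astronomically larger than $n$: with overwhelming probability every vertex $v$ of $G$ lies in roughly $N\ell/n\gg k$ of the sets $S_1,\dots,S_N$ (indeed there are even indices $i\neq j$ with $S_i=S_j$), so an adversarial $I$ of size $t\le k$ consisting of indices whose sets all contain a common vertex has $\sigma_I(v)=t$, and pairs in $F$ can have intersection as large as $\ell$. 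The Chernoff/hypergeometric bounds you invoke are valid only for a \emph{fixed} $I$ or fixed pair (there the mean of $\sigma_I(v)$ is indeed tiny); if instead you keep the failure probability inside the union bound, you must take $\Delta=\Omega\bigl((1-\delta)t\ell\bigr)$ to beat the $N^t\approx n^{(1-\delta)\ell t}$ factor, at which point the lower bound $|E_F(S)|\ge |F|\ell^2/(4\Delta^2)$ degenerates to $O(|F|/t^2)$ and the exponent no longer dominates the union bound. (There is also a minor bookkeeping slip: with $\Delta=O(1/\delta)$ your forced-edge count would be $\Omega(\delta^2|F|\ell^2)$, not $\Omega(|F|\ell^2/\delta^2)$; that part is harmless.)

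The paper avoids this by conditioning on a much weaker global event that really does hold w.h.p.: the disperser property (Lemma~\ref{lem:disperser}), namely that \emph{every} union of at most $n^{0.99\delta}/\ell$ of the $S_i$ covers at least a $0.01\delta$ fraction of the maximum possible size. Unlike bounded multiplicities, this is robust to vertices appearing in many $S_i$ and even to duplicated sets, and its per-$M$ failure probability, about $n^{-(1-\delta)|M|\ell}$, does survive the union bound over all small $M\subseteq[N]$. Combined with the reduction to subgraphs of minimum degree at least $d$ (Observation~\ref{obs:density-min-deg}), every vertex of $\bigcup_{j\in J}S_j$ is forced to have degree $\Omega(\delta d\ell)$ in $(\cV^G(F),\cE^G(F))$ --- its set $S_j$ has at least $d$ neighbors in $F$, whose union is large by the disperser --- yielding $|\cE^G(F)|=\Omega(\delta^2 k' d\ell^2)$ with no multiplicity control at all. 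To salvage your cross-pair counting you would need to replace the bounded-multiplicity event by a union-based statement of this kind.
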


We will use the following observation that allows us to translate a graph with large density to a (sub)graph with large minimum degree. This observation is folklore and appears implicitly e.g. in~\cite{Charikar00}. Nonetheless, we provide the proof in Appendix~\ref{app:soundness-helper-lemmas} for completeness.

\begin{observation} \label{obs:density-min-deg}
For any $H = (V_H, G_H)$, there exists $S' \subseteq H$ such that $\mindeg(H[S']) \geq \den(H)$.
\end{observation}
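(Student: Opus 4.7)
The plan is to apply Observation~\ref{obs:density-min-deg} to reduce the bound on $\den_{\le k}(G')$ to a minimum-degree statement: it suffices to prove that, with probability at least $0.9$, no subset $I\subseteq [N]$ with $|I|\le k$ satisfies $\mindeg(G'[I])\ge D$, where $D := 10^7\log n/(\ell\delta^2)$. Any such $I$ has at least $D|I|/2$ edges in $G'[I]$, so by a union bound it suffices, for every $k'\in\{\lceil D\rceil+1,\ldots,k\}$, every $I\subseteq[N]$ of size $k'$, and every candidate edge set $F\subseteq\binom{I}{2}$ of size $m := \lceil Dk'/2\rceil$, to bound the probability that $F\subseteq E(G'[I])$.

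Next I would establish a \emph{disperser} property for the randomly sampled sets $\{S_i\}$: with probability at least $0.95$ over $\{S_i\}$ alone, every $I\subseteq[N]$ of size $t\le k$ satisfies $\bigl|\bigcup_{i\in I}S_i\bigr|\ge c_1\delta\cdot t\ell$ for an absolute constant $c_1>0$. The proof is a direct union bound: fixing $t$ and a candidate union $T\subseteq V$ of size $U=c_1\delta t\ell$, one has $\Pr[\bigcup_{i\in I}S_i\subseteq T]\le (U/n)^{t\ell}$, so union-bounding over the $\binom{n}{U}\le (en/U)^{U}$ targets $T$ and over the at-most $N^t$ sets $I$, using $N\le 1000k\cdot n^{(1-\delta)\ell}$ and $k\ell\le n^{0.99\delta}$, gives a total of at most $n^{-\Omega(\delta\ell t)}$ per $t\ge 2$, which sums to $\le 0.05$.

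Condition on this disperser event and fix $I$, $F$ as above. Over the randomness of $G\sim\cG(n,1/2)$, the event ``$F\subseteq E(G'[I])$'' is equivalent to every edge of $E^*(F) := \bigcup_{(i,j)\in F}\binom{S_i\cup S_j}{2}$ being present in $G$, which occurs with probability exactly $2^{-|E^*(F)|}$. The crux of the argument is a combinatorial lower bound $|E^*(F)|\ge c_2\delta^2\cdot m\ell^2$ for some absolute $c_2>0$: each $(i,j)\in F$ contributes a clique on $|S_i\cup S_j|\ge \ell$ vertices (so $\Omega(\ell^2)$ edges) to $E^*(F)$, and the disperser property limits how often any particular $(u,v)\in E^*(F)$ can be contributed by different pairs in $F$. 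One makes this precise by charging these overlaps via the local multiplicities $d_I(v):=|\{i\in I_F : v\in S_i\}|$ of vertices $v$ in $V_F=\bigcup_{i\in I_F}S_i$, which the disperser forces to be small on average; the factor $\delta^2$ emerges because the ``disperser slack'' is proportional to $\delta$.

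Putting the pieces together, the failure probability is at most
\[
\Pr[\text{disperser fails}] \;+\; \sum_{k'=\lceil D\rceil+1}^{k}\binom{N}{k'}\binom{\binom{k'}{2}}{m}\cdot 2^{-c_2\delta^2 m\ell^2}.
\]
Substituting $m=\Omega(k'\log n/(\ell\delta^2))$ yields $c_2\delta^2 m\ell^2 = \Omega(k'\ell\log n)$, which dominates $\log\binom{N}{k'}+\log\binom{\binom{k'}{2}}{m}=O((1-\delta)k'\ell\log n + m\log k')$, so the sum is at most $0.05$ for large enough constants in $D$. The main technical obstacle is establishing the bound $|E^*(F)|\ge\Omega(\delta^2 m\ell^2)$ for \emph{arbitrary} $F$: adversarial structures inside $\binom{I}{2}$ such as dense clusters or stars allow many of the cliques $K_{S_i\cup S_j}$ to share a large common part, and ruling this out uniformly requires a careful charging argument that leverages the disperser to control the vertex multiplicities $d_I(v)$ inside each $V_F$.
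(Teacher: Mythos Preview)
Your proposal does not prove the stated Observation at all. The statement to be shown is the elementary fact that every graph $H$ contains an induced subgraph whose minimum degree is at least $\den(H)$. Instead, you have written a proof sketch for a different result entirely --- the soundness lemma (Lemma~\ref{lem:soundness-generic}) --- in which you \emph{apply} Observation~\ref{obs:density-min-deg} rather than establish it.

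The paper's proof of the Observation is the standard one-paragraph greedy argument: repeatedly delete any vertex of degree less than $\den(H)$; the total number of edges removed is strictly less than $\den(H)\cdot|V_H|=|E_H|$, so the process terminates with a nonempty subgraph, which by construction has minimum degree at least $\den(H)$. Your write-up contains nothing addressing this. (As a sketch of the soundness lemma your outline is close in spirit to the paper's actual argument, but that is not what was asked for here.)
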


Another auxiliary lemma that is useful for us is that for any subset $M \subseteq [N]$ not too large, the size of the union $\left|\cup_{j \in M} S_j\right|$ is not too small relative to $|M| \cdot \ell$. This lemma is also standard and has been used in prior works (e.g.~\cite{BermanS89,zuckerman1996unapproximable,ChalermsookCKLM17}). Once again, we provide its proof in Appendix~\ref{app:soundness-helper-lemmas} for completeness.

\begin{lemma}\torestate{ \label{lem:disperser}
Suppose $N \le 1000 \ell n^{(1-\delta)\ell}$, $20\le \ell$. 
Then with probability at least 0.95 over a sample $G' \sim \rgp_{N,\ell} \circ \cG(n, \frac{1}{2})$, $G' = (\{S_i\}_{i \in [N]},E')$, the following event occurs: for every $M \subseteq [N]$ with $|M| \le n^{0.99\delta} / \ell$, we have $|\bigcup_{i \in M} S_i| \geq 0.01 \delta |M| \ell$.}
\end{lemma}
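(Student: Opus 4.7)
My plan is to bound the failure probability by a union bound over $M$ combined with a simple covering argument on $\bigcup_{i \in M} S_i$. The event in question depends only on $S_1, \dots, S_N$ and not on the edges of $G$, so I can work purely with these independent uniform random $\ell$-subsets of $V$, ignoring the $\cG(n, 1/2)$ randomness entirely.

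First I would fix $m \in \{1, \dots, \lfloor n^{0.99\delta}/\ell \rfloor\}$ and $M \subseteq [N]$ with $|M| = m$, and set $t = \lfloor 0.01 \delta\, m \ell \rfloor$. The event $|\bigcup_{i \in M} S_i| < 0.01\delta\, m\, \ell$ is contained in the union, over all $t$-subsets $T \subseteq V$, of the event that $S_i \subseteq T$ for every $i \in M$. Since the $S_i$ are i.i.d.\ uniform $\ell$-subsets of $V$, for a fixed $T$ this probability equals $(\binom{t}{\ell}/\binom{n}{\ell})^m$, which is at most $(t/n)^{m\ell}$ using the termwise bound $(t-i)/(n-i) \le t/n$. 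A union bound over $T$ gives
\[
\Pr\Bigl[\,\Bigl|\bigcup_{i \in M} S_i\Bigr| < 0.01\delta\, m\, \ell\,\Bigr] \;\le\; \binom{n}{t}\,(t/n)^{m\ell}.
\]

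Next I would union bound over $M$ and over $m$. Plugging in $\binom{N}{m} \le (eN/m)^m$, $\binom{n}{t} \le (en/t)^t$, and the hypotheses $N \le 1000\,\ell\, n^{(1-\delta)\ell}$ and $m\ell \le n^{0.99\delta}$, and simplifying, the per-$m$ contribution becomes at most
\[
\Bigl(1000\,e\,\ell \cdot \bigl(e^{0.01\delta}(0.01\delta)^{0.99}\bigr)^\ell\Bigr)^{m}.
\]
The key algebraic manipulations are: (i) the identity $(en/t)^t (t/n)^{m\ell} = e^t (t/n)^{m\ell - t}$; (ii) the inequality $(1-0.99\delta)(1-0.01\delta) \ge 1 - \delta$, which ensures that after using $m\ell \le n^{0.99\delta}$ the $n^{(1-\delta)m\ell}$ factor from $(eN)^m$ exactly cancels the $n^{-(1-\delta)m\ell}$ contribution from $(t/n)^{m\ell-t}$; and (iii) $(0.01\delta)^{1 - 0.01\delta} \le (0.01\delta)^{0.99}$, since $0.01\delta \in (0,1)$. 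For every $\delta \in (0,1)$ the constant $e^{0.01\delta}(0.01\delta)^{0.99}$ is bounded by $0.02$, so for $\ell \ge 20$ the quantity $1000\,e\,\ell \cdot 0.02^\ell$ is below $10^{-29}$. Summing the geometric series in $m$ then gives total failure probability far below $0.05$, establishing the claim with a large slack.

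The main obstacle will be the (minor) algebraic bookkeeping -- tracking the cancellation of the $n^{(1-\delta)\ell m}$ powers and verifying that the residual constant stays bounded away from $1$ uniformly in $\delta \in (0,1)$ and $\ell \ge 20$. A small edge case is that when $t < \ell$ the true probability is zero (each $S_i$ has $\ell$ elements), while the bound $(t/n)^{m\ell}$ remains a valid upper bound; once this is observed, nothing further is required.
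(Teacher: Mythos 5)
Your proposal is correct and follows essentially the same route as the paper's proof: a union bound over $M$ combined with a covering argument bounding $\Pr[\bigcup_{i\in M} S_i \subseteq U]$ by $(|U|/n)^{|M|\ell}$, then summing a rapidly decaying geometric series using $N \le 1000\ell n^{(1-\delta)\ell}$, $|M|\ell \le n^{0.99\delta}$, and $\ell \ge 20$. The only differences are cosmetic bookkeeping choices (e.g.\ $(eN/m)^m$ and $(en/t)^t$ versus the paper's cruder $N^{|M|}$ and $n^{|U|}$ bounds, and treating $S_i$ as a uniform $\ell$-subset rather than $\ell$ independent uniform draws), which do not affect the validity of the estimate.
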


With the above observation and lemma ready, we can now prove the soundness guarantee.

\begin{proof}[Proof of Lemma~\ref{lem:soundness-generic}]
We will assume that the event in Lemma~\ref{lem:disperser} occurs and show that conditioned on this event, under the randomness of $G$, with probability 0.95 {\em all} $k$-subgraphs of $G'$ have density at most $d := 10^7 \log n / (\ell \delta^2)$. Lemma~\ref{lem:soundness-generic} then follows immediately since $(0.95)^2 > 0.9$.

Consider any $J \subseteq [N]$ of size $k' \leq k$.
For brevity, let $\cF(J)$ denote the set of all graphs whose vertices are $S_{j}$ for $j \in J$ and the minimum degree is at least $d$.
For each $F = (\{S_j\}_{j \in J}, E_F) \in \cF(J)$, let $\cE^G(F)$ denote $\bigcup_{\{S_j, S_{j'}\} \in E_F} \{\{u, v\} \mid u \in S_j, v \in S_{j'} \text{ and } u \ne v\}$, or in words, the set of edges of $G$ which have one endpoint in $S_j$ and one endpoint in $S_{j'}$ for an edge $(S_j, S_{j'})\in E_F$.
Observe that
\begin{align}
\Pr_G[G'[\{S_j\}_{j \in J}] = F] 
&\leq \Pr_G[\cE^G(F) \subseteq E]
\, =\, 2^{-|\cE^G(F)|}, \label{eq:union-original-random-graph}
\end{align}
where the inequality follows by definition of the randomized graph product --- since $(S_i,S_j)$ is an edge if and only if $S_i \cup S_j$ is a clique in $G$, the event $G'[\{S_j\}_{j \in J}] = F$ contains the event $\cE^G(F) \subseteq E$ --- and the final equality follows because $G \sim \cG(n, \frac{1}{2})$.
To bound $|\cE^G(F)|$, let $\cV^G(F) := \cup_{j \in J} S_j$.

Since we have conditioned on the event in Lemma~\ref{lem:disperser} occurring,\footnote{Note that $|J| = k' \leq k \leq n^{0.99\delta} / \ell$ by our assumption and hence $J$ satisfies the condition in Lemma~\ref{lem:disperser}.} we have $|\cV^G(F)| \geq 0.01\delta k' \ell$. Now, consider any $v \in \cV^G(F)$; from definition of $\cV^G(F)$, $v \in S_j$ for some $j \in J$. Since $F \in \cF(J)$, $S_j$ must have at least $d$ neighbors in $F$. 
Let $S_{j_1}, \dots, S_{j_{d'}}$ denote $S_j$'s neighbors in $F$, with $d' \ge d$. By applying the bound in Lemma~\ref{lem:disperser}, we have $|S_{j_1} \cup \dots \cup S_{j_{d'}}| \geq 0.01\delta d' \ell \geq 0.01\delta d \ell$. In other words, $v$ has degree at least $0.01 \delta d \ell - 1 \geq 0.005 \delta d \ell$ in the graph $(\cV^G(F), \cE^G(F))$. This implies that
\begin{align*}
|\cE^G(F)| &\geq \frac{1}{2} \left(0.01\delta k' \ell\right) \left(0.005 \delta d \ell\right) \\
&\geq 10^{-5} \delta^2 k' d \ell^2.
\end{align*}
Plugging the above bound back into~\eqref{eq:union-original-random-graph}, we have
\begin{align} \label{eq:fixed-subgraph-probability-bound}
\Pr_G[G'[\{S_j\}_{j \in J}] = F] \leq 2^{-10^{-5} \delta^2 k' d \ell^2}
\end{align}

We can use the above inequality to bound the probability that $\{S_j\}_{j \in J}$ induces a subgraph with minimum degree at least $d$ as follows:
\begin{align*}
\Pr[\mindeg(G'[\{S_j\}_{j \in J}]) \geq d]
\, =\, \sum_{F \in \cF(J)}  \Pr_G[G'[\{S_j\}_{j \in J}] = F] 
&\overset{\eqref{eq:fixed-subgraph-probability-bound}}{\leq} 2^{(k')^2} \cdot 2^{-10^{-5} \delta^2 k' d \ell^2}
 \, \leq\, 2^{-10^{-6} \delta^2 k' d \ell^2}.
\end{align*}
where the first inequality follows because there are at most $2^{(k')^2}$ subgraphs of an $k'$-vertex graph, and to obtain the final inequality we have applied that $\ell \geq k \ge k'$ and $d \geq 10^7 / (\delta^2 \ell)$.

Applying Observation~\ref{obs:density-min-deg}, the existence of a $k$-subgraph with density at least $d$ would imply the existence of some $J \subseteq [N]$ with $|J| \le k$ and minimum degree at least $d$. 
Taking union bound over all $J \subseteq [N]$ of size at most $k$ and applying our above bound, we have
\begin{align*}
\Pr[\den_{\leq k}(G)] 
\, \leq\, \sum_{k'=1}^{k} N^{k'} \cdot 2^{-10^{-6} \delta^2 k' d \ell^2} 
&= \sum_{k'=1}^{k} \left(N \cdot 2^{-10^{-6} \delta^2 d \ell^2}\right)^{k'} \\
 &\leq \sum_{k'=1}^{k} \left(8 \cdot n \cdot 2^{-10^{-6} \delta^2 d \ell}\right)^{k' \ell} \\
&\leq \sum_{k'=1}^{k} 0.01^{k' \ell}
\,\leq\, 0.95,
\end{align*}
where to obtain the second line we use that $N \le 1000 k n^{(1-\delta)\ell}$ and that $\ell \ge k \ge 20$, and in the final line we use that $d \ge 10^7 \log n / (\ell \delta)^2$.
This completes our proof.
\end{proof}

\section{Tight Running Time Lower Bounds}\label{sec:time}

In this section, we prove our tight running time lower bounds for $O(1)$-approximating Densest $k$-Subgraph, $O(1)$-approximating $k$-Biclique and (exact) Graph Pattern Detection.

\subsection{Constant Approximation for Densest $k$-Subgraph}

We start with the $n^{\Omega(k)}$ running time lower bound for $O(1)$-approximating Densest $k$-Subgraph, from which the remaining results easily follow. We remark that this running time lower bound improves upon that of $n^{\Omega(\log k)}$, which is implicit in~\cite{ChalermsookCKLM17}.

\begin{theorem}
\label{thm:dks-time}
Assuming Hypothesis~\ref{hyp:strong-planted-clique}, for any constant $C > 0$, there is no $f(k) \cdot n^{o(k)}$-time algorithm that can approximate Densest $k$-Subgraph to within a factor $C$ for any function $f$. Furthermore, this holds even in the perfect completeness case where the input graph is promised to contain a $k$-clique.
\end{theorem}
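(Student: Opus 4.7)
The plan is to reduce Planted $\lceil n^\delta \rceil$-Clique to $C$-approximate Densest $k$-Subgraph via the randomized graph product of Section~2, invoking Lemmas~\ref{lem:completeness} and~\ref{lem:soundness-generic}. Suppose for contradiction that for some function $f$ and some $g$ with $g(k) = o(k)$, there is an algorithm $\cA^*$ that $C$-approximates Densest $k$-Subgraph in time $f(k) \cdot N^{g(k)}$ on $N$-vertex inputs. Let $\delta \in (0, 1/2)$ be the constant from Hypothesis~\ref{hyp:strong-planted-clique}. Given an $n$-vertex planted-clique instance $G$, choose a slowly growing integer $k = k(n)$ with $k(n) \to \infty$, $k(n) = o(\sqrt{\log n})$, and $f(k(n)) \leq n$ (for instance $k(n) := \max\{k \ge 20 : f(k) \le n\}$), and set $\ell := \max\bigl(k, \lceil 10^8 C \log n/(k \delta^2) \rceil\bigr)$ and $N := \lceil 10 k \cdot n^{(1-\delta)\ell} \rceil$. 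Form $G' \sim \rgp_{N,\ell}(G)$, run $\cA^*$ on $(G', k)$, and output ``planted'' iff $\cA^*$ returns a subgraph of density greater than $k/(10C)$.

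Next, I would verify that all parameter conditions of Lemmas~\ref{lem:completeness} and~\ref{lem:soundness-generic} hold for large $n$. Since $k(n) = o(\sqrt{\log n})$ we have $\ell = \Theta(\log n/(k\delta^2))$, hence $\ell \geq k$ and $k\ell = \Theta(\log n/\delta^2) \leq n^{0.99\delta}$, and $N$ clearly lies in $[10 k n^{(1-\delta)\ell}, 1000 k n^{(1-\delta)\ell}]$. By Lemma~\ref{lem:completeness}, in the planted case $G'$ contains an actual $k$-clique, of density $(k-1)/2$, with probability $\geq 0.9$; this gives perfect completeness as a bonus. By Lemma~\ref{lem:soundness-generic}, in the null case $\den_{\leq k}(G') \leq 10^7 \log n/(\ell \delta^2) \leq k/(10C)$ with probability $\geq 0.9$, where the last step uses $\ell \geq 10^8 C \log n/(k \delta^2)$. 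Since $(k-1)/(2C) > k/(10C)$ for all $k \geq 2$, a $C$-approximation forces $\cA^*$ to return density $\geq (k-1)/(2C) > k/(10C)$ in the planted case and density $\leq k/(10C)$ in the null case, so the threshold distinguisher is correct with probability $\geq 0.9$ on each side---well above the $2/3$, $1/3$ thresholds of SPCH.

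Finally I would analyze the running time to obtain an $n^{o(\log n)}$-time planted-clique algorithm contradicting Hypothesis~\ref{hyp:strong-planted-clique}. The calibration $\ell = \Theta(\log n/k)$ gives $\log N = (1-\delta)\ell \log n + O(\log k) = \Theta(\log^2 n/k)$, hence $N^{g(k)} = 2^{g(k) \log N} = n^{O((g(k)/k) \log n)}$. Because $k(n) \to \infty$ and $g(k)/k \to 0$, the quantity $(g(k(n))/k(n)) \log n$ is $o(\log n)$. Combined with $f(k(n)) \leq n = n^{o(\log n)}$, the total running time is $f(k(n)) \cdot N^{g(k(n))} = n^{o(\log n)}$, contradicting SPCH.

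The hard part will be the simultaneous calibration of $k$, $\ell$, and $N$: $\ell$ must be large enough to push the soundness density below $(k-1)/(2C)$, small enough that $\log N = O(\log^2 n/k)$, and at least $k$ so that Lemma~\ref{lem:soundness-generic} applies. The quantitatively crucial point---and the reason this yields a sharp $n^{\Omega(k)}$ bound rather than the $n^{\Omega(\log k)}$ from prior work---is that $k \cdot \log N = \Theta(\log^2 n)$ under this calibration, so any genuine $o(k)$ savings in the exponent for Densest $k$-Subgraph translate directly into $n^{o(\log n)}$ savings for planted clique.
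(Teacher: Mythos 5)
Your proposal is correct and follows essentially the same route as the paper: the identical randomized-graph-product reduction with $\ell = \Theta(C\log n/(\delta^2 k))$ and $N = \Theta(k\, n^{(1-\delta)\ell})$, Lemmas~\ref{lem:completeness} and~\ref{lem:soundness-generic} for the two cases, a density threshold around $k/(10C)$, and the calibration $k\log N = \Theta(\log^2 n)$ so that an $N^{o(k)}$-time $C$-approximation yields an $n^{o(\log n)}$-time planted-clique distinguisher. Only a cosmetic caveat: your suggested instantiation $k(n)=\max\{k\ge 20: f(k)\le n\}$ should additionally be capped (e.g.\ by $\log\log n$) to guarantee $k(n)=o(\sqrt{\log n})$, but this is the same level of informality as the paper's ``sufficiently slowly growing $k$.''
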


We will prove Theorem~\ref{thm:dks-time} by simply selecting an appropriate setting of parameters for the randomized graph product. Specifically, we will let $\ell = O(C \log n / k)$ and $N = n^{O(\ell)} = n^{o(\log n)}$; the generic soundness lemma (Lemma~\ref{lem:soundness-generic}) then implies that the density of any $k$-subgraph in the soundness case is at most $O(k / C)$ which yields the desired $C$ inapproximability factor.

\begin{proof}[Proof of Theorem~\ref{thm:dks-time}]
We will reduce the problem of distinguishing samples from $\cG(n, \frac{1}{2})$ vs. $\cG(n,\frac{1}{2}, \lceil n^{\delta} \rceil)$ to approximating Densest $k$-subgraph.

For $C$ the constant specified in the statement of the theorem, choose $\ell = \lceil \frac{10^8 C \log n}{\delta^2 k} \rceil$ and $N = \lceil 100 k n^{(1-\delta)\ell} \rceil$, so that $d = \frac{10^7 \log n}{\ell \delta^2} \le \frac{k}{10 C}$.
Given a graph $G$, we sample $G' \sim \rgp_{N,\ell}(G)$.

\noindent {\bf Completeness:} If $G \sim \cG(n,\frac{1}{2}, \lceil n^{\delta} \rceil)$, then by Lemma~\ref{lem:completeness} and since $N \ge 10 k n^{(1-\delta)\ell}$, we have that with probability at least $1-\exp(-4k) \ge 0.9$, $G' = \rgp_{N,\ell}(G)$ has a $k$-clique, and so $\den_{\le k}(G') \ge k-1$.

\noindent{\bf Soundness: } For any $20 \le k \le \ell$ and any $\delta$ bounded away from $0$, we clearly satisfy the requirement of Lemma~\ref{lem:soundness-generic} that $k\ell \le n^{0.99 \delta}$ and that $N \le 1000 k n^{(1-\delta)\ell}$ for any sufficiently large $n$.
Hence, if $G \sim \cG(n, \frac{1}{2})$, applying the Lemma to $G' \sim \rgp_{N,\ell}(G)$ we have that with probability at least $0.9$, $\den_{\le k}(G') \le d \le \frac{k}{10 C}$.

Thus, any algorithm which approximates Densest $k$-subgraph up to a factor of $C$ in time $f(k) N^{\epsilon k}$ can solve the planted $\lceil n^{\delta} \rceil$-clique problem in time $f(k) (100 k n^{(1-\delta)\ell})^{\epsilon k} = g(k) n^{\frac{(1-\delta)\delta^2 C}{10^8} \epsilon \log n}$ for $g(k) = f(k)(100k)^{\eps k}$.
This contradicts Hypothesis~\ref{hyp:strong-planted-clique} whenever $\lim_{n \to \infty} \eps = 0$.
Choosing $k$ to be a sufficiently slowly growing function of $n$, for any $\epsilon$ decreasing in $k$ we have a contradiction of the Hypothesis.
This concludes the proof.
\end{proof}

\subsection{$O(1)$-Approximation for $k$-Biclique}
\label{subsec:biclique}

Recall that a $k$-biclique, denoted by $K_{k, k}$, is a complete bipartite graph where each side has exactly $k$ vertices. In the \emph{$k$-Biclique} problem, we are given an undirected graph $G$ and a positive integer $k$. Further, we are promised that $G$ contains a $K_{k, k}$ as a subgraph. Our goal is to output a balanced biclique in $G$ of size as large as possible. (Note that we say that an algorithm achieves $\alpha$-approximation if the output biclique has size at least $k / \alpha$.)

We prove the following tight running time lower bound for $O(1)$-approximation of $k$-Biclique:

\begin{corollary} \label{cor:biclique}
Assuming Hypothesis~\ref{hyp:strong-planted-clique}, for any constant $C > 0$, there is no $f(k) \cdot n^{o(k)}$-time algorithm that can approximate $k$-Biclique to within a factor $C$ for any function $f$. Furthermore, this holds even when we are promised that the input graph contains a $2k$-clique.
\end{corollary}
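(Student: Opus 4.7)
The plan is to reduce directly from Theorem~\ref{thm:dks-time}, invoked with parameter $2k$ in place of $k$ and with a sufficiently large constant approximation ratio $C'$ (depending on $C$). Because the $k$-Biclique problem takes as input a general (not necessarily bipartite) graph, the reduction is the identity: the graph $G'$ produced by the randomized graph product in the proof of Theorem~\ref{thm:dks-time} is fed directly to the purported $k$-Biclique algorithm. In the completeness case, Lemma~\ref{lem:completeness} yields a $2k$-clique in $G'$, which simultaneously fulfills the promised hypothesis of the corollary and contains a $K_{k,k}$ (partition the clique arbitrarily into two disjoint $k$-sets).

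For soundness, the key elementary observation is that any $K_{k',k'}$ has $(k')^{2}$ edges on $2k'$ vertices, i.e., density $k'/2$. Consequently, the existence of a $K_{k',k'}$ with $k'\ge k/C$ forces $\den_{\le 2k}(G')\ge k/(2C)$. Invoking Lemma~\ref{lem:soundness-generic} with $2k$ in place of $k$ and with the parameter choices from the proof of Theorem~\ref{thm:dks-time} (for approximation factor $C'$), we get w.h.p.\ $\den_{\le 2k}(G')\le 2k/(10C')$, which is strictly less than $k/(2C)$ as soon as $C'\ge C$. Hence in the soundness case $G'$ contains no $K_{k',k'}$ with $k'\ge k/C$, so a $C$-approximation algorithm for $k$-Biclique distinguishes the two cases.

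Retaining the parameter choices $\ell=\Theta(\log n / k)$ and $N=n^{O(\ell)}=n^{o(\log n)}$ from the proof of Theorem~\ref{thm:dks-time}, a hypothetical $f(k)\cdot n^{o(k)}$-time $C$-approximation algorithm for $k$-Biclique, when run on the $N$-vertex graph $G'$, takes time $f(k)\cdot N^{o(k)}=f(k)\cdot n^{o(\log n)}$ and thereby distinguishes $\cG(n,\frac{1}{2},\lceil n^{\delta}\rceil)$ from $\cG(n,\frac{1}{2})$ in time $n^{o(\log n)}$, contradicting Hypothesis~\ref{hyp:strong-planted-clique}. I expect no real obstacle: since the density of a $K_{k',k'}$ grows linearly in $k'$, the $O(1)$ density gap already supplied by Theorem~\ref{thm:dks-time} translates immediately into a linear gap on the biclique side length, and the only bookkeeping is to check that replacing $k$ by $2k$ preserves the preconditions $\ell\ge k\ge 20$ and $k\ell\le n^{0.99\delta}$ of Lemma~\ref{lem:soundness-generic}, which is immediate provided $k$ grows sufficiently slowly with $n$ as in the proof of Theorem~\ref{thm:dks-time}.
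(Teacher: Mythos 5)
Your proposal is correct, but it takes a different (white-box) route from the paper. You go back to the randomized graph product instance from the proof of Theorem~\ref{thm:dks-time} (instantiated with $2k$ and a constant $C'$), and argue soundness via the elementary fact that a $K_{k',k'}$ has density $k'/2$, so a constant bound on $\den_{\le 2k}(G')$ from Lemma~\ref{lem:soundness-generic} rules out any balanced biclique of side $\ge k/C$; completeness comes from Lemma~\ref{lem:completeness} giving a $2k$-clique (hence the promise and a $K_{k,k}$), and you contradict Hypothesis~\ref{hyp:strong-planted-clique} directly. The paper instead gives a two-line black-box reduction in the opposite direction: given a Densest $k$-Subgraph instance with perfect completeness, run the purported biclique algorithm with parameter $\lfloor k/2\rfloor$ (the promise holds since the graph has a $k$-clique), obtain a $t$-biclique with $t \ge k/(4C)$, and pad it with arbitrary vertices to a $k$-set inducing at least $t^2 \ge k^2/(16C^2)$ edges, i.e.\ an $O(C^2)$-approximation for Densest $k$-Subgraph, contradicting Theorem~\ref{thm:dks-time} as stated. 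The paper's argument is shorter and reuses Theorem~\ref{thm:dks-time} without reopening its parameters; yours avoids the quadratic loss $C \mapsto 16C^2$ and yields the $2k$-clique promise immediately, at the cost of re-checking the preconditions $\ell \ge 2k$, $2k\ell \le n^{0.99\delta}$, $N \ge 20k\, n^{(1-\delta)\ell}$ and redoing the SPCH time bookkeeping. Two small points you should make explicit: (i) in the soundness case the promise is violated, so the algorithm's output is arbitrary and the distinguisher must verify that the returned set is indeed a balanced biclique of side at least $k/C$ before accepting; (ii) if the returned side $k'$ exceeds $k$, restrict to a sub-biclique of side $\lceil k/C\rceil \le k$ so that the witness lives on at most $2k$ vertices and $\den_{\le 2k}$ applies. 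Neither is a real obstacle.
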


\begin{proof}[Proof of Corollary~\ref{cor:biclique}]
Suppose contrapositively that there is an $f(k) \cdot n^{o(k)}$-time algorithm $\bA$ that can approximate $k$-Biclique to within a factor of $C$ for some function $f$. We may use it to approximate Densest $k$-Subgraph with perfect completeness as follows. On a given instance $(G, k)$ of Densest $k$-Subgraph with perfect completeness, we run our algorithm $\bA$ on $(G, \lfloor k / 2 \rfloor)$. From the approximation guarantee of $\bA$, we will find a $t$-Biclique for $t \geq \lfloor k/2 \rfloor / C \geq \frac{k}{4C}$. By taking this biclique together with arbitrary $(k - 2t)$ additional vertices, we find a subset of size $k$ that induces at least
$t^2 \geq \frac{k^2}{16C^2}$ edges. Hence, we have found a $16C^2$-approximation for Densest $k$-Subgraph in time $f(k) \cdot n^{o(k)}$, which by Theorem~\ref{thm:dks-time} violates Hypothesis~\ref{hyp:strong-planted-clique}.
\end{proof}

\subsection{Pattern detection}
\label{subsec:pattern-detection}

Theorem~\ref{thm:dks-time} also yields the following corollary for the running time of pattern detection:
\begin{corollary}\label{cor:not-induced}
Assuming Hypothesis~\ref{hyp:strong-planted-clique}, for almost all $k$-node patterns $H$, the not necessarily induced pattern detection problem cannot be solved in time $f(k) \cdot n^{o(k)}$.
\end{corollary}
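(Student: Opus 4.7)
The plan is to reduce from the $O(1)$-inapproximability of Densest $k$-Subgraph (Theorem~\ref{thm:dks-time}) directly to pattern detection. The intuition is that in the completeness case of Theorem~\ref{thm:dks-time} the host graph contains a $k$-clique, which automatically contains every $k$-node pattern $H$ as a (not-necessarily-induced) subgraph; whereas in the soundness case every $k$-vertex subset of the host graph carries too few edges to accommodate any $H$ with a constant fraction of the $\binom{k}{2}$ possible edges. So the reduction is the trivial one: hand the Densest $k$-Subgraph instance $G$ directly to the pattern-detection algorithm and return its answer.

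Concretely, I would fix any $k$-node pattern $H$ with $|E(H)| \geq \epsilon \binom{k}{2}$ for some constant $\epsilon > 0$; a standard Chernoff bound shows that a uniformly random $H \sim \cG(k, 1/2)$ satisfies this (say with $\epsilon = 1/4$) with probability $1 - o(1)$, matching the ``almost all patterns'' qualifier stated in the corollary's footnote. I then apply Theorem~\ref{thm:dks-time} with a constant approximation factor $C$ chosen large enough that $k/(10C) < \epsilon(k-1)/2$ for all sufficiently large $k$ (any $C > 1/(5\epsilon)$ works, using the explicit bound $\den_{\leq k}(G') \le k/(10C)$ from the proof of Theorem~\ref{thm:dks-time}). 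The hard distribution produces graphs $G$ for which, in the \textbf{completeness} case, $G$ contains a $k$-clique and thus trivially a copy of every $k$-node pattern, including $H$; and in the \textbf{soundness} case, every $k$-vertex subgraph of $G$ has density at most $k/(10C)$ and hence at most $k^2/(10C) < \epsilon k(k-1)/2 \leq |E(H)|$ edges, so $G$ contains no copy of $H$ as a subgraph.

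Consequently any $f(k) \cdot n^{o(k)}$-time algorithm for detecting $H$ would distinguish these two cases and thereby $C$-approximate Densest $k$-Subgraph in the same running time, contradicting Theorem~\ref{thm:dks-time} (and hence SPCH). There is essentially no technical obstacle here: once Theorem~\ref{thm:dks-time} is in hand, the reduction is trivial and the only care needed is to choose $C$ as a function of $\epsilon$ so that the soundness edge count provably falls below $|E(H)|$, and to note that this holds uniformly for almost all $k$-node patterns in the sense made precise by the footnote of the corollary.
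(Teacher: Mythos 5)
Your proposal is correct and matches the paper's proof: the paper likewise observes that almost all $H \sim \cG(k,\tfrac{1}{2})$ have a constant fraction of the $\binom{k}{2}$ possible edges, so a non-induced detector for such $H$ distinguishes the completeness case (a $k$-clique, hence a copy of $H$) from the soundness case (all $k$-subgraphs too sparse to contain $H$), contradicting Theorem~\ref{thm:dks-time}. Your write-up merely makes explicit the choice of $C$ relative to $\epsilon$, which the paper leaves implicit.
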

\noindent In the statement of Corollary~\ref{cor:not-induced}, by ``almost all $k$-node patterns'' we mean w.h.p. over $H \sim \cG(k,1/2)$.
\footnote{It may be more natural to sample uniformly from all {\em unlabeled} $k$-node patterns, but w.h.p.~a graph drawn from $\cG(k,1/2)$ has no non-trivial automorphisms~\cite{erdHos1963asymmetric}, so the two notions of ``almost all $k$-node patterns'' are in fact equivalent.}
\begin{proof}
By standard concentration inequalities, most $H \sim \cG(k,\frac{1}{2})$ have average degree $\frac{k}{2} \pm o(k)$. For such a pattern $H$, an algorithm that solves the not necessarily induced pattern detection problem also gives $O(1)$-approximation for Densest $k$-Subgraph. Hence, the corollary immediately follows from Theorem~\ref{thm:dks-time}.
\end{proof}

\restatetheorem{thm:induced-pattern}
\begin{proof}
We assume that $H$ is at least $\frac{k}{4}$-dense (that is, has average degree at least $\frac{k}{2}$). This is without loss of generality as otherwise, we may take the complement of $H$ --- for induced pattern detection the complexity is the same.

We start our reduction from an instance $G$ of Densest $k$-Subgraph as in Theorem~\ref{thm:dks-time}.
We randomly color all the vertices of $G$ in $k$ colors, one for each vertex of $H$. 
We construct a graph $G'$ from $G$ by keeping edge $(u,v)\in G$ if and only if $u,v$ are have different colors, and the vertices in $H$ corresponding to those colors share an edge. (Note that we do not add any edges.)
\begin{description}
\item[Completeness:] If $G$ has a $k$-clique, with probability at least $k^{-k}$ it has $k$ different colors. In this case, the same vertices in $G'$ will form an induced copy of $H$.
\item[Soundness:] If $G$ does not have a $\frac{k}{4}$-dense $k$-subgraph, this remains true after removing edges. Hence $G'$ also does not contain any $\frac{k}{4}$-dense $k$-subgraph --- and in particular no copy of $H$.
\end{description}
As a result, if we can solve the induced  pattern detection problem in time $f(k) \cdot n^{o(k)}$, we can achieve $4$-approximation for Densest $k$-Subgraph with probability $k^{-k}$ in time $f(k) \cdot n^{o(k)}$. By repeating this construction $100 \cdot k^k$ times, we can achieve $4$-approximation for Densest $k$-Subgraph with probability 0.99 in time $O(f(k) \cdot k^k) \cdot n^{o(k)}$. Together with Theorem~\ref{thm:dks-time}, this concludes our proof.
\end{proof}

\section{Tight Inapproximability Results}\label{sec:inapprox}

In this section, we prove tight polynomial-time inapproximability results for Densest $k$-Subgraph, Smallest $k$-Edge Subgraph, Steiner $k$-Forest, Directed Steiner Network, and Densest $k$-Subhypergraph.

\subsection{Densest $k$-Subgraph}

We start with the $o(k)$-factor hardness of Densest $k$-Subgraph, from which our other results follow. 

\restatetheorem{thm:dks-inapprox}

The proof of Theorem~\ref{thm:dks-inapprox} is once again via selecting an appropriate setting of parameters for the randomized graph product. Specifically, we will select $\ell = O((\log n) \cdot g(k) / k)$ where $g(k) = o(k)$ is the assumed approximation ratio and $N = n^{O(\ell)} = n^{o(\log n)}$; the generic soundness lemma (Lemma~\ref{lem:soundness-generic}) then implies that the density of any $k$-vertex subgraph in the soundness case is at most $k / g(k)$ as desired. The arguments are formalized below.

\begin{proof}[Proof of Theorem~\ref{thm:dks-inapprox}]
We will reduce the problem of distinguishing samples from $\cG(n, \frac{1}{2})$ vs. $\cG(n,\frac{1}{2}, \lceil n^{\delta} \rceil)$ to approximating Densest $k$-subgraph on an $N$ vertex graph.

Let $g(k) \le k$ be any function growing with $k$.
Choose $\ell = \lceil \frac{10^8 g(k) \log n}{\delta^2 k} \rceil$ and $N = \lceil 100 k n^{(1-\delta)\ell} \rceil$, so that $d = \frac{10^7 \log n}{\ell \delta^2} \le \frac{k}{10 g(k)}$.
Given a graph $G$, we sample $G' \sim \rgp_{N,\ell}(G)$.

\noindent {\bf Completeness:} If $G \sim \cG(n,\frac{1}{2}, \lceil n^{\delta} \rceil)$, 
then just as in the proof of Theorem~\ref{thm:dks-time}, Lemma~\ref{lem:completeness} implies that with probability at least $0.9$, $G' = \rgp_{N,\ell}(G)$ has a $k$-clique.

\noindent{\bf Soundness: } For any $20 \le k \le \ell$ and $\delta$ bounded away from $0$, we satisfy the requirements of Lemma~\ref{lem:soundness-generic} (just as in the proof of Theorem~\ref{thm:dks-time}).
Applying the lemma, if $G \sim \cG(n, \frac{1}{2})$, we conclude that with probability at least $0.9$, $\den_{\le k}(G') \le d \le \frac{k}{10 g(k)}$. This means that any $k$-vertex subgraph of $G'$ contains at most $\frac{k^2}{10g(k)} < \frac{\binom{k}{2}}{g(k)}$ edges.

Hence, any algorithm which approximates Densest $k$-subgraph within $g(k)$ in time $f(k) \poly(N)$ can solve the planted $\lceil n^{\delta} \rceil$-clique problem in time $f(k)\cdot \poly(100 k n^{(1-\delta)\ell}) = h(k)\cdot \poly\left(n^{\frac{(1-\delta)\delta^2}{10^8}\frac{ g(k)}{k} \log n}\right)$ for $h(k) = f(k)\poly(100k)$.
Choosing $k$ to be a sufficiently slowly growing function of $n$, for any function $g(k)$ with $\lim_{n \to 0} \frac{g(k)}{k} = 0$ we have a contradiction of Hypothesis~\ref{hyp:strong-planted-clique}, as desired.
\end{proof}

\subsection{Smallest $k$-Edge Subgraph}

Recall that, in the \emph{Smallest $k$-Edge Subgraph}, we are given an undirected graph $G$ and $k \in \N$. The goal is to output a set $S$ of vertices of $G$ such that the induced subgraph on $S$ contains at least $k$ edges. The \emph{perfect completeness case} of Smallest $k$-Edge Subgraph is a special case where $k = \binom{\kappa}{2}$ for some $\kappa \in \N$ and it is promised that $G$ contains a clique of size $\kappa$.

\begin{corollary} \label{cor:smallest-subgraph}
Assuming Hypothesis~\ref{hyp:strong-planted-clique}, there is no $f(k) \cdot \poly(n)$-time algorithm that can approximate Smallest $k$-Edge Subgraph to within a factor $o(\sqrt{k})$ for any function $f$. Furthermore, this holds even in the perfect completeness case.
\end{corollary}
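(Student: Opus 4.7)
The plan is a direct reduction from Theorem~\ref{thm:dks-inapprox}. I would take the hard Densest-$\kappa$-Subgraph instance $G'$ produced by the randomized graph product (renaming the DkS parameter from $k$ to $\kappa$ to avoid a clash with the $k$ of this corollary) and interpret it as a Smallest-$K$-Edge-Subgraph instance with $K := \binom{\kappa}{2}$. This choice is natural: a $\kappa$-clique induces exactly $K$ edges on $\kappa$ vertices, which both realizes the perfect-completeness promise ($K = \binom{\kappa}{2}$ and the graph is guaranteed to contain a $\kappa$-clique) and makes $|S^{*}| \le \kappa$ in the completeness case.

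The substantive step is the soundness analysis. In the random-graph case, I want to argue that any $S$ with $|E(G'[S])| \ge K$ must satisfy $|S| = \Omega(g(\kappa) \cdot \kappa)$. This follows from the density upper bound used in the proof of Theorem~\ref{thm:dks-inapprox}: one has $\den_{\le k_s}(G') \le d = O(\kappa / g(\kappa))$, so any $S$ of size at most $k_s$ that induces at least $K$ edges forces $|S| \ge K/d = \Omega(g(\kappa) \cdot \kappa)$. The snag is that the theorem as stated only records the density bound for subsets of size $\le \kappa$; I would address this by re-applying Lemma~\ref{lem:soundness-generic} with soundness parameter $k_s = \Theta(g(\kappa) \cdot \kappa)$ rather than $\kappa$. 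Crucially, the density bound $d$ in Lemma~\ref{lem:soundness-generic} is independent of the soundness parameter, so this just requires verifying the constraints $k_s \le \ell$ and $k_s \ell \le n^{0.99\delta}$. With the Theorem~\ref{thm:dks-inapprox} choice $\ell = \Theta(g(\kappa) \log n / (\delta^2 \kappa))$, both constraints go through provided $\kappa = O(\sqrt{\log n})$, which is still a valid slowly-growing parameter under SPCH. This is the main obstacle, but it is purely a matter of tracking parameters and not a conceptual difficulty.

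With the strengthened soundness in hand, the reduction concludes cleanly. Suppose for contradiction that some $f(K) \cdot \poly(n)$-time algorithm $\mathcal{B}$ achieves approximation ratio $\alpha(K) = o(\sqrt{K})$. Setting $g(\kappa) := \lceil \alpha(K) \rceil + 1$ gives an $o(\kappa)$ function, since $\alpha(K) = o(\sqrt{K}) = o(\kappa)$. On the DkS instance above with $K = \binom{\kappa}{2}$, the algorithm $\mathcal{B}$ returns $|S| \le \alpha(K) \cdot \kappa$ in the completeness case and $|S| \ge \Omega(g(\kappa) \cdot \kappa) > \alpha(K) \cdot \kappa$ in the soundness case, so simply comparing $|S|$ to this threshold distinguishes the two in $f(K) \cdot \poly(n) = h(\kappa) \cdot \poly(n)$ time for some function $h$. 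This contradicts Theorem~\ref{thm:dks-inapprox}, completing the proof; the perfect-completeness clause is automatic from the choice $K = \binom{\kappa}{2}$.
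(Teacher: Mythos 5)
Your proposal is correct, but it takes a genuinely different route from the paper's. The paper argues black-box from Theorem~\ref{thm:dks-inapprox}: assuming an algorithm $\bA$ for Smallest $p$-Edge Subgraph with ratio $\sqrt{p}/g(p)$, it runs $\bA$ with $p=\binom{k}{2}$ on a perfect-completeness Densest $k$-Subgraph instance, obtains a set $S$ with $|S|\le k^2/g(p)$ inducing at least $p$ edges, and then extracts (by enumerating $k$-subsets of $S$, analyzed via a random $k$-subset) a $k$-set with $\Omega(g(p)^2)$ edges, so that the hardness of Densest $k$-Subgraph can be invoked; no re-examination of the randomized graph product is needed. You instead go back inside the reduction and re-apply Lemma~\ref{lem:soundness-generic} with a larger soundness parameter $k_s=\Theta(g(\kappa)\kappa)$ (exploiting that the density bound $d$ does not depend on that parameter), so that in the soundness case every vertex set inducing $\binom{\kappa}{2}$ edges has $\Omega(g(\kappa)\kappa)$ vertices, yielding gap instances of Smallest $K$-Edge Subgraph directly; the price is re-checking the constraints $k_s\le\ell$ and $k_s\ell\le n^{0.99\delta}$, which as you note forces $\kappa=O(\sqrt{\log n})$ -- harmless, since $\kappa$ is taken to be a slowly growing function of $n$ anyway. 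What your route buys is quantitative: the black-box subsampling step converts an SkES ratio of $\sqrt{p}/g(p)$ into a DkS ratio of $\Theta(k^2/g(p)^2)$, which is $o(k)$ only when $g(p)=\omega(\sqrt{k})=\omega(p^{1/4})$, i.e.\ by itself it defeats only ratios of the form $o(k^{1/4})$ in the edge parameter, whereas your white-box gap argument delivers the full $o(\sqrt{k})$ factor without this quadratic loss. Two details to make explicit in a full write-up: the perfect-completeness promise (and hence $\mathrm{OPT}\le\kappa$) holds only with probability $0.9$ over the reduction, exactly as in Theorems~\ref{thm:dks-time} and~\ref{thm:dks-inapprox}; and the distinguisher should verify that the returned set is feasible (induces at least $K$ edges) in addition to comparing its size against the threshold.
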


We prove the above corollary by reducing from Densest $k$-Subgraph;
we remark here that similar reductions between the two problems are folklore and have appeared before in literature, e.g. in~\cite{HajiaghayiJ06}. However, the exact statements proved before were slightly different, so we provide the full proof here for completeness.

\begin{proof}[Proof of Corollary~\ref{cor:smallest-subgraph}]
Suppose contrapositively that there is an $f(p) \cdot \poly(n)$-time algorithm $\bA$ that can approximate Smallest $p$-Edge Subgraph in the perfect completeness case to within a factor of $\sqrt{p} / g(p)$ for some $g(p) = \omega(1)$ and $f(p)$. We will construct an algorithm $\bB$ that achieves $o(k)$-approximation for Densest $k$-Subgraph in the perfect completeness case in time $h(k) \cdot \poly(n)$ for some function $h$. Our corollary then follows from Theorem~\ref{thm:dks-inapprox}.

Given an instance $(G = (V, E), k)$ of Densest $k$-Subgraph, Algorithm $\bB$ works as follows:
\begin{itemize}
\item Run algorithm $\bA$ on $G$ with $p = \binom{k}{2}$. Let the output of $\bA$ be $S$.
\item Enumerate to find a subset $T^* \subseteq S$ of size $k$ that maximizes $E[T^*]$.
\item Output $T^*$.
\end{itemize}
From the assumed approximation guarantee of $\bA$, we have $|S| \leq k \cdot \sqrt{p} / g(p) \leq k^2 / g(p)$. As a result, the running time of the second step is at most $\binom{|S|}{k} \poly(n) \leq 2^{O(k \log k)}\poly(n)$. Hence, the running time of the entire algorithm $\bB$ is $(2^{O(k \log k)} + f(O(k^2))) \cdot \poly(n)$ as desired. 

As for the approximation guarantee, let $T$ be a random subset of $S$ of size $k$. Notice that each edge in $E[S]$ belongs to $T$ with probability $\frac{k(k - 1)}{|S|(|S| - 1)}$. As a result, we have
\begin{align*}
E[T^*] 
\geq \E_T[E[T]] 
= \frac{k(k - 1)}{|S|(|S| - 1)} \cdot E[S] 
\geq \frac{k(k - 1)}{|S|(|S| - 1)} \cdot p 
\geq \frac{k^2 / 2}{(k^2 / g(p))^2} \cdot (k^2 / 4)
= (g(p))^2 / 8.
\end{align*}
Since $g(p) \to \infty$ as $p \to \infty$, the approximation ratio of $\bB$ is $o(k)$ as desired.
\end{proof}

\subsection{Steiner $k$-Forest}

Recall that, in the \emph{Steiner $k$-Forest} problem (aka the \emph{$k$-Forest} problem), we are given an edge-weighted undirected graph $G = (V, E)$, a set $\{(s_1, t_1), \dots, (s_\ell, t_\ell)\}$ of demand pairs and $k \in \N$. The goal is to find a (not necessarily induced) subgraph of $G$ with smallest total edge weight that connects at least $k$ demand pairs.

\begin{corollary} \label{cor:k-forest}
Assuming Hypothesis~\ref{hyp:strong-planted-clique}, there is no $f(k) \cdot \poly(n)$-time algorithm that can approximate Steiner $k$-Forest to within a factor $o(\sqrt{k})$ for any function $f$.
\end{corollary}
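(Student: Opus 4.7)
The plan is to reduce from Smallest $k$-Edge Subgraph, which by Corollary~\ref{cor:smallest-subgraph} admits no $f(k)\cdot\poly(n)$-time $o(\sqrt{k})$-approximation under SPCH. The reduction is a standard vertex-cost gadget. Given an instance $(G=(V,E),k)$ of Smallest $k$-Edge Subgraph, construct a Steiner $k$-Forest instance as follows: take vertex set $V \cup \{v':v\in V\}$, add a weight-$1$ edge $(v,v')$ for each $v\in V$, and add a weight-$0$ edge $(u,v)$ for each $(u,v)\in E$. Declare the demand set to be $\{(u',v'):(u,v)\in E\}$, and set the Steiner target to $k$.

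The key observation is that the Steiner optimum equals the Smallest $k$-Edge Subgraph optimum on the nose. For any feasible Steiner forest $F$, we may assume WLOG that $F$ contains every weight-$0$ edge (they are free), so $w(F)=|S|$ where $S=\{v\in V:(v,v')\in F\}$. The only edge of $G'$ incident to the terminal $u'$ is $(u,u')$, so connecting the demand pair $(u',v')$ forces both $u,v\in S$; conversely, if $u,v\in S$ then the free edge $(u,v)$ already connects them via the path $u'\!-\!u\!-\!v\!-\!v'$. Therefore the number of connected demand pairs equals $|E[S]|$, and $F$ is feasible iff $|E[S]|\geq k$. Consequently, any $\alpha$-approximate Steiner solution immediately yields an $\alpha$-approximate vertex set $S$ for Smallest $k$-Edge Subgraph.

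Since the reduction preserves the parameter $k$ exactly and blows $|V|$ up by only a constant factor (adding at most $|E|\le n^2$ terminals), an $f(k)\cdot\poly(n)$-time $o(\sqrt{k})$-approximation for Steiner $k$-Forest would transfer directly to an algorithm with the same running time and ratio for Smallest $k$-Edge Subgraph, contradicting Corollary~\ref{cor:smallest-subgraph}. The only subtlety worth flagging is verifying that the Steiner solver cannot sidestep the vertex-cost gadget by routing through some clever alternative path; this is dispatched by the observation above that each terminal $u'$ has degree one in $G'$, so it \emph{must} be reached via the weight-$1$ edge $(u,u')$, which is what pins $w(F)$ to $|S|$. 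The reduction itself is folklore (cf.~\cite{HajiaghayiJ06}); the contribution here is simply plugging in the new tight lower bound from Corollary~\ref{cor:smallest-subgraph}.
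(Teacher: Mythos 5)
Your proposal is correct and follows essentially the same route as the paper: both reduce from Smallest $k$-Edge Subgraph (Corollary~\ref{cor:smallest-subgraph}) via an exact, parameter-preserving folklore reduction attributed to~\cite{HajiaghayiJ06}. The only difference is the gadget---the paper uses a uniform-weight star with a new center $c$ and demand pairs equal to $E$ (which also avoids relying on weight-$0$ edges), whereas you split vertices and use weight-$0$ copies of $G$'s edges---but the cost/feasibility correspondence and the conclusion are identical.
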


Corollary~\ref{cor:k-forest} immediately follows from Corollary~\ref{cor:smallest-subgraph} via the following reduction presented in \cite{HajiaghayiJ06}:

\begin{theorem}[{\cite[Theorem 6.2]{HajiaghayiJ06}}]
If there is an $f(k) \cdot \poly(n)$-time $g(k)$-approximation algorithm for Steiner $k$-Forest, then there is an $f(k) \cdot \poly(n)$-time $g(k)$-approximation algorithm for Smallest $k$-Edge Subgraph.
\end{theorem}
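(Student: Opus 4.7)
The plan is to give a direct, approximation-preserving reduction from Smallest $k$-Edge Subgraph to Steiner $k$-Forest. Given an instance $(G=(V,E),k)$ of Smallest $k$-Edge Subgraph, I would construct an edge-weighted graph $H$ as follows: introduce a central vertex $c$; for each $v \in V$, two copies $v_L, v_R$ joined by a \emph{vertex-gadget} edge of weight $1$, together with a free edge $\{v_L, c\}$ of weight $0$; and for each $e = \{u,v\} \in E$, declare $(u_R, v_R)$ a demand pair. Then ask the Steiner $k$-Forest algorithm to connect $k$ of these $|E|$ pairs.

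The key structural property is that every $v_R$ has degree $1$ in $H$, incident only to the gadget edge $\{v_L, v_R\}$, so any path into $v_R$ must traverse that gadget and cannot avoid it. Consequently, for any family $D$ of demand pairs with corresponding edge set $F_D \subseteq E$, every subgraph of $H$ connecting $D$ has weight at least $|V(F_D)|$, with equality achieved by the natural solution that takes the gadgets $\{v_L, v_R\}$ and connectors $\{v_L, c\}$ for $v \in V(F_D)$ and routes every demand through $c$. Therefore the minimum weight to connect $k$ demand pairs in $H$ equals $\min_{F \subseteq E,\, |F|\ge k} |V(F)|$, which is exactly the Smallest $k$-Edge Subgraph optimum on $(G,k)$. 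Moreover, given any feasible $H' \subseteq H$ connecting $k$ pairs, the extracted vertex set $S = \{v : \{v_L, v_R\} \in H'\}$ satisfies $|S| \le w(H')$ (since only gadget edges carry weight) and $|E[S]| \ge k$ (since it contains both endpoints of each routed $G$-edge). Because $H$ is built in $\poly(n)$ time and the value of $k$ is unchanged, running the assumed $g(k)$-approximation algorithm on $H$ and returning $S$ yields a $g(k)$-approximate Smallest $k$-Edge Subgraph solution in time $f(k)\cdot \poly(n)$, as required.

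The main thing to guard against is the possibility of unintended cheap paths: with naive vertex-split constructions where $0$-weight bridges directly connect the two sides of different vertex gadgets, a demand pair for $e=\{u,v\}$ can sometimes be routed using only one of the two gadgets by bouncing through a neighboring edge's bridge, severing the link between Steiner-forest cost and the number of $G$-vertices used. Making every $v_R$ a leaf of the gadget-free subgraph, with all sharing forced to occur on the $v_L$-side via the common vertex $c$, is precisely what eliminates such shortcuts, guaranteeing the exact equality of optima above and the clean transfer of the approximation ratio.
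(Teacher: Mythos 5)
Your reduction is correct: the leaf property of $v_R$ forces both unit gadget edges for every routed demand, so the Steiner $k$-Forest optimum equals $\min_{F\subseteq E,\,|F|\ge k}|V(F)|$, which is the Smallest $k$-Edge Subgraph optimum, and the extraction argument transfers the ratio $g(k)$ with $k$ unchanged. The paper's proof (following Hajiaghayi--Jain) uses the same underlying idea but a strictly simpler instantiation: the host graph is just the star on $\{c\}\cup V$ with uniform (unit) edge weights, and the demand pairs are exactly the edges of $G$, with the original vertices themselves as terminals. In a star, the unique $u$--$v$ path is $u$--$c$--$v$, so connecting the demand for $\{u,v\}$ already forces buying both unit edges $\{c,u\}$ and $\{c,v\}$; the ``unintended cheap path'' failure mode you guard against cannot arise, so the vertex-splitting gadget ($v_L,v_R$) and the weight-$0$ hub edges are unnecessary. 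Your version buys nothing extra and introduces one small dependency the paper's avoids: it needs the Steiner $k$-Forest formulation to allow zero edge weights (otherwise you must perturb them to a tiny positive value and absorb the resulting slack), whereas the uniform-weight star sidesteps this entirely.
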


For interested readers, we remark that this reduction is as follows: for a given instance $(G = (V, E), k)$ of Smallest $k$-Edge Subgraph, create a star graph (with uniform edge weights) on vertices $\{c\} \cup V$ where $c$ is the new vertex which is the center of the star. Then, the demand pairs are exactly $E$. It is clear that there is a one-to-one correspondence between a solution of Steiner $k$-Forest problem and a solution of Small $k$-Edge Subgraph of the same cost.

\subsection{Directed Steiner Network}

Recall that, in the \emph{Directed Steiner Network} problem (aka the \emph{Directed Steiner Forest} problem), we are given an arc-weighted directed graph $G = (V, E)$, a set $\{(s_1, t_1), \dots, (s_k, t_k)\}$ of $k$ demand pairs. The goal is to find a (not necessarily induced) subgraph of $G$ with smallest total arc weight in which $t_i$ is reachable from $s_i$ for all $i \in [k]$.

\begin{corollary} \label{cor:dsn}
Assuming Hypothesis~\ref{hyp:strong-planted-clique}, there is no $f(k) \cdot \poly(n)$-time algorithm that can approximate Directed Steiner Network with $k$ terminal pairs to within a factor $o(\sqrt{k})$ for any function $f$.
\end{corollary}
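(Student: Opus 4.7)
The plan is to derive Corollary~\ref{cor:dsn} from Corollary~\ref{cor:smallest-subgraph} via a cost-preserving reduction from the perfect-completeness case of Smallest $K$-Edge Subgraph (with $K = \binom{\kappa}{2}$ and optimum $\kappa$) to Directed Steiner Network with exactly $K$ terminal pairs. Unlike the star-gadget reduction that proved Corollary~\ref{cor:k-forest}, DSN requires \emph{all} terminal pairs to be connected, so the ``pick any $K$ out of $|E|$ demand edges'' trick does not transfer. Instead I would use a layered, multicolored-clique style gadget that forces the DSN solution to spread its chosen vertex-arcs across $\kappa$ distinct color classes.

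Given an instance $(G = (V, E), K)$ with $K = \binom{\kappa}{2}$ and a promised $\kappa$-clique, I would first randomly partition $V$ into $\kappa$ color classes $V_1, \dots, V_\kappa$; a fixed $\kappa$-clique is multicolored (one vertex per class) with probability at least $\kappa!/\kappa^\kappa \ge e^{-\kappa}$, and repeating the reduction $O(e^\kappa)$ times boosts the success probability to a constant while remaining FPT in $\kappa$. I would then construct the DSN instance as follows: for each $i \in [\kappa]$ and $v \in V_i$, introduce vertices $v^{i,\mathrm{in}}, v^{i,\mathrm{out}}$ joined by an arc of weight $1$; for each pair $1 \le i < j \le \kappa$, create a terminal pair $(s_{ij}, t_{ij})$ together with zero-weight arcs $s_{ij} \to v^{i,\mathrm{in}}$ for $v \in V_i$, $u^{j,\mathrm{out}} \to t_{ij}$ for $u \in V_j$, and $v^{i,\mathrm{out}} \to u^{j,\mathrm{in}}$ for every edge $\{v, u\} \in E$ with $v \in V_i$ and $u \in V_j$. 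This yields exactly $\binom{\kappa}{2} = K$ terminal pairs and $\poly(n, \kappa)$ vertices.

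Completeness is immediate: if the $\kappa$-clique $\{v_1, \dots, v_\kappa\}$ is multicolored with $v_i \in V_i$, then activating the arcs $v_i^{i,\mathrm{in}} \to v_i^{i,\mathrm{out}}$ routes every demand $(s_{ij}, t_{ij})$ at total cost $\kappa$. For soundness, given a DSN solution of cost $T$, let $S_i \subseteq V_i$ denote the set of vertices whose layer-$i$ arc is used, and set $S = \bigcup_i S_i \subseteq V$. Since the color classes are disjoint we have $|S| = \sum_i |S_i| = T$; moreover, because the two endpoints of any routed demand $(s_{ij}, t_{ij})$ necessarily lie in the disjoint classes $V_i$ and $V_j$, distinct demand pairs correspond to distinct edges of $G[S]$, so $|E[S]| \ge \binom{\kappa}{2} = K$. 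Thus $S$ is a feasible Smallest $K$-Edge Subgraph solution with $|S| \le T$, and an $f(K) \cdot \poly(n)$-time $o(\sqrt{K})$-approximation for DSN would yield $|S| \le o(\sqrt{K}) \cdot \kappa = o(\sqrt{K}) \cdot \mathrm{OPT}$, contradicting Corollary~\ref{cor:smallest-subgraph}.

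The main obstacle I foresee is ensuring that distinct demand pairs consume distinct edges in $G[S]$: if we omitted the coloring, a single edge $\{a, b\} \in E$ could simultaneously satisfy every demand through the route $s_{ij} \to a^{i,\mathrm{in}} \to a^{i,\mathrm{out}} \to b^{j,\mathrm{in}} \to b^{j,\mathrm{out}} \to t_{ij}$, and the edge-count lower bound $|E[S]| \ge K$ would fail, leaving DSN optima arbitrarily smaller than the Smallest-$K$-Edge-Subgraph optimum. The multicolored partitioning resolves this by forcing each demand to consume an edge whose endpoints live in a distinct pair of color classes, at the price of an $e^{\kappa}$-factor in the running time that is absorbed by the FPT dependence on $K$.
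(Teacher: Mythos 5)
Your reduction has a genuine soundness gap: the claim that ``distinct demand pairs correspond to distinct edges of $G[S]$'' does not follow from your construction, because a demand path may hop through intermediate color classes. You add the zero-weight cross arcs $v^{i,\mathrm{out}} \to u^{j,\mathrm{in}}$ for \emph{every} pair $i < j$, so a path serving $(s_{ij}, t_{ij})$ can have the form $s_{ij} \to v^{i,\mathrm{in}} \to v^{i,\mathrm{out}} \to w^{m,\mathrm{in}} \to w^{m,\mathrm{out}} \to u^{j,\mathrm{in}} \to u^{j,\mathrm{out}} \to t_{ij}$ with $i < m < j$; this certifies only the edges $\{v,w\}$ and $\{w,u\}$, not any edge of $G$ between $V_i$ and $V_j$. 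Concretely, if the solution opens one vertex $w_i$ per class and these form a ``monotone rainbow path'' ($\{w_i, w_{i+1}\} \in E$ for all $i$), then \emph{all} $\binom{\kappa}{2}$ demands are routable at cost $\kappa$, yet $G[S]$ has only $\kappa - 1$ edges, so $|E[S]| \ge K$ fails badly and the extracted set $S$ need not even be a feasible Smallest $K$-Edge Subgraph solution. Since the soundness-case graphs are only guaranteed to have no small \emph{dense} subgraph (they may well contain long paths), the DSN optimum in your instance can be far cheaper than any certificate of a dense subgraph, and the approximation guarantee does not transfer back; your own ``obstacle'' discussion catches the single-shared-edge failure mode but the coloring alone does not exclude these multi-hop routes.

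The paper's reduction avoids exactly this by using a two-layer (bipartite) structure rather than a single layered copy per class: each class $V_i$ gets two copies $V_i^1, V_i^2$, the weight-$1$ arcs are $s_i \to v^1$ (for $v \in V_i$) and $v^2 \to t_j$ (for $v \in V_j$), the weight-$0$ arcs go only from layer $1$ to layer $2$ ($v^1 \to u^2$ iff $\{v,u\} \in E$ or $v = u$), and the demands are all $k^2$ ordered pairs $(s_i, t_j)$. Every $s_i$--$t_j$ path then has length exactly three and crosses exactly one weight-$0$ arc, so each off-diagonal demand pins down a genuine edge of $G[S]$ between classes $i$ and $j$, yielding $|E[S]| \ge \binom{k}{2}$, while placing the weights on the source/sink arcs makes the DSN cost directly upper-bound $|S|$; the completeness cost is $2k$ and the parameter bookkeeping matches yours. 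Your vertex-splitting gadget, the random multicoloring with $\kappa^{O(\kappa)}$ (or $e^{O(\kappa)}$) repetitions, and the quantitative accounting are all fine; the missing idea is a mechanism forcing each demand path to use exactly one cross-arc of $G$, e.g.\ the bipartite layering above.
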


We now prove Corollary~\ref{cor:dsn} via a reduction from Smallest $p$-Edge Subgraph. This reduction is similar to the reduction from the Label Cover problem by Dodis and Khanna~\cite{DodisK99}, and was also used in subsequent works~\cite{ChitnisFM18,DinurM18}. Since we start from a slightly different problem, we repeat the argument below for completeness.

\begin{proof}[Proof of Corollary~\ref{cor:dsn}]
Suppose contrapositively that there is an $f(k) \cdot \poly(n)$-time algorithm $\bA$ that can approximate Directed Steiner Network to within a factor of $o(\sqrt{k})$ for some function $f$. We will construct an algorithm $\bB$ that achieves $o(\sqrt{p})$-approximation for Smallest $p$-Subgraph in the perfect completeness case in time $h(p) \cdot \poly(n)$ for some function $h$. Our corollary then follows from Corollary~\ref{cor:smallest-subgraph}. 

Below, we will present an algorithm $\bB$ that succeeds with probability only $k^{-\Omega(k)}$. This suffices for us, since we may repeat this algorithm $k^{O(k)}$ times and output the best solution found (as in the proof of Theorem~\ref{thm:induced-pattern}). Given an instance $(G = (V, E), p)$ of Smallest $p$-Edge Subgraph where $p = \binom{k}{2}$, Algorithm $\bB$ works as follows:
\begin{itemize}
\item Let $V_1 \cup \cdots \cup V_k$ be a uniformly random partition of the vertex set $V$.
\item Construct an edge-weighted directed graph $G' = (V', E')$ as follows:
\begin{itemize}
\item For every $V_i$, create two copies of it: $V_i^1 = \{v^1 \mid v \in V_i\}$ and $V_i^2 = \{v^2 \mid v \in V_i\}$. \\ For brevity, let $V^1 = V_1^1 \cup \cdots \cup V_k^1$ and $V^2 = V_1^2 \cup \cdots \cup V_k^2$.
\item Let $V'$ be $V^1 \cup V^2 \cup \{s_1, \dots, s_k\} \cup \{t_1, \dots, t_k\}$ where $s_1, \dots, s_k, t_1, \dots, t_k$ are new vertices.
\item For every $v^1 \in V^1$ and $u^2 \in V^2$, add $(v^1, u^2)$ of weight 0 to $E'$ if and only if $\{v, u\} \in E$ or $v = u$.
\item For every $i \in [k]$ and $v \in V^1_i$, add an arc $(s_i, v^1)$ of weight 1 to $E'$
\item For every $j \in [k]$ and $v \in V^2_j$, add an arc $(v^2, t_j)$ of weight 1 to $E'$.
\item Let the demand pairs be $\{(s_i, t_j)\}_{i \in [k], j \in [k]}$.
\end{itemize}
\item Run algorithm $\bA$ on $(G', \{(s_i, t_j)\}_{i \in [k], j \in [k]})$. Let the output of $\bA$ be $R$.
\item Let $S$ be the set of vertices $v \in V$ such that $(s_i, v^1)$ or $(v^2, t_i)$ belongs to $R$ for some $i \in [k]$.
\item Output $S$.
\end{itemize}
Suppose that there exists a $k$-clique $C$ in $G$. With probability at least $k^{-k}$, each vertex of $C$ belongs to a different partition $V_i$. When this is the case, we may take the all 0-weighted arcs, and the 1-weighted arcs adjacent to $C$ as a solution. This results in a solution of cost $2k$. From the assumed approximation guarantee of $\bA$, we must have $|S| \leq 2k \cdot o(\sqrt{k^2}) = k \cdot o(\sqrt{p})$. It is also simple to see that $S$ must induce at least one edge in $G$ between $V_i, V_j$ for every $i,j \in [k]$. This means that $S$ is a solution of Smallest $p$-Edge Subgraph of size at most $k \cdot o(\sqrt{p})$. Hence, this is an $o(\sqrt{p})$-approximate solution as desired. Finally, it is obvious to see that $\bB$ runs in time $f(O(p)) \cdot \poly(n)$.
\end{proof}

\subsection{Densest $k$-Subhypergraph}

In the \emph{Densest $k$-Subhypergraph} problem, we are given a hypergraph $G = (V, E)$ and an integer $k$. The goal is to output a set $S \subseteq V$ of $k$ vertices that maximizes the number of hyperedges fully contained in $S$. We remark here that each hyperedge in this hypergraph may have an unbounded size.

\begin{theorem} \label{thm:dksh}
Assuming Hypothesis~\ref{hyp:strong-planted-clique}, there is no $f(k) \cdot \poly(n)$-time algorithm that can approximate Densest $k$-Subhypergraph to within a factor $2^{o(k)}$ for any function $f$.
\end{theorem}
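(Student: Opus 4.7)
I will reduce Densest $k$-Subgraph (Theorem~\ref{thm:dks-inapprox}) to Densest $k$-Subhypergraph by a clique-to-hyperedge construction that amplifies the polynomial $o(k)$-gap into an exponential $2^{\Omega(k)}$-gap.

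The starting point is to invoke Theorem~\ref{thm:dks-inapprox} with a slowly growing function $g(k)$ satisfying $g(k) = \omega(1)$ and $g(k) = o(k)$. Under SPCH, this yields in $f(k)\cdot\poly(n)$ time an $N$-vertex graph $G$ with the usual gap: in the completeness case $G$ contains a $k$-clique, while in the soundness case every $k$-vertex subset induces at most $k^2/g(k)$ edges. Combined with the density bound of Lemma~\ref{lem:soundness-generic}, this implies that in the soundness case the maximum clique inside any $k$-subset has size at most $t = O(k/g(k))$.

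I then form the hypergraph $H$ with $V(H) := V(G)$ whose hyperedges are exactly the cliques of $G$ of size at least $2$. For completeness, the $k$-clique $K$ of $G$ has $2^k - k - 1$ subsets of size $\ge 2$, each of which is a clique and hence a hyperedge of $H$ contained in $K$; so $\mathrm{OPT}(H,k)\geq 2^{\Omega(k)}$. For soundness, fix any $k$-subset $S$ of $V(G)$; then $G[S]$ has max clique at most $t = O(k/g(k))$, so the total number of cliques in $G[S]$ is at most
\[
\sum_{c=0}^{t}\binom{k}{c} \;\le\; 2\binom{k}{t} \;\le\; 2\bigl(eg(k)\bigr)^{O(k/g(k))} \;=\; 2^{O\left((k/g(k))\log g(k)\right)} \;=\; 2^{o(k)},
\]
using $\log g(k) = o(g(k))$ since $g(k)=\omega(1)$. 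Hence $\mathrm{OPT}(H,k)\le 2^{o(k)}$, and the resulting gap is $2^{\Omega(k)}$. A hypothetical $f(k)\cdot\poly(n)$-time $2^{o(k)}$-approximation algorithm for DkSH run on $H$ would distinguish the two cases, contradicting Theorem~\ref{thm:dks-inapprox}.

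The main technical obstacle I expect is making the reduction efficient enough: the number of hyperedges of $H$ equals the number of cliques in $G$, which a priori can be as large as $N^{O(k/g(k))}$ in the soundness regime, plus the $2^k$ planted-clique contribution in completeness. This is in tension with the requirements above since a larger gap pushes $g(k)$ down while efficient enumeration pushes $g(k)$ up. I expect to resolve this by carefully choosing $g(k)$ in the invocation of Theorem~\ref{thm:dks-inapprox}, together with the observation that outside the planted $k$-clique the graph $G$ behaves exactly as in the soundness case, so cliques outside the planted structure are few and the global hyperedge count remains compatible with an $f(k)\cdot\poly(n)$ reduction.
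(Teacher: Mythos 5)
Your reduction has a genuine gap: it is not an $f(k)\cdot\poly(N)$-time reduction, so it cannot yield a contradiction with Theorem~\ref{thm:dks-inapprox}. The hypergraph you build has one hyperedge per clique of $G$, and the soundness guarantee of Theorem~\ref{thm:dks-inapprox} (every $k$-subset has density at most $k/g(k)$, hence clique number at most $O(k/g(k))$) does \emph{not} bound the total number of cliques in the whole $N$-vertex graph: a graph can have tiny $k$-subset density and still contain $N^{\omega(1)}$ cliques of small size (e.g.\ sparse random-like graphs with edge probability $N^{-\eps}$, $\eps\to 0$). In fact, for the hard instances actually produced by the randomized graph product with the parameters of Theorem~\ref{thm:dks-inapprox} ($\ell\approx g(k)\log n/k$, $N\approx n^{(1-\delta)\ell}$), the expected number of cliques in the soundness case is maximized at clique size $\approx (1-\delta)\log n/\ell$ and is of order $2^{\Theta(\log^2 n)} = N^{\Theta(k/g(k))}$, which is superpolynomial in $N$ and far exceeds $\poly(N)\cdot 2^{O(k)}$ (recall $k=O(\sqrt{g(k)\log n})\ll\log n$ there). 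So your closing hope that ``cliques outside the planted structure are few'' is unsubstantiated and appears false: the graph product of a plain $\cG(n,\tfrac12)$ sample has enormously many small cliques. Consequently you can neither write down $H$ in $f(k)\cdot\poly(N)$ time nor run a hypothetical $f(k)\cdot\poly(\cdot)$-time DkSH algorithm on it within a time budget that contradicts an FPT-only lower bound; the gap arithmetic ($2^{\Omega(k)}$ vs.\ $2^{o(k)}$ inside $k$-subsets) is fine, but the reduction's size/time is the fatal issue, and restricting to cliques of one fixed size does not save it, since enumerating cliques of size $\omega(1)$ still costs $N^{\omega(1)}$.

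This is exactly why the paper does \emph{not} start from Theorem~\ref{thm:dks-inapprox}: it starts from the $n^{\Omega(k)}$ \emph{running-time} lower bound for $O(1)$-approximating $k$-Biclique (Corollary~\ref{cor:biclique}, itself from Theorem~\ref{thm:dks-time}). There the reduction is allowed to spend $n^{O(\ell)}=n^{o(k)}$ time: hyperedges are all $2\ell$-cliques for the single size $\ell=\lceil 2k/g(2k)^{0.1}\rceil$, and an $f(\rho)\cdot\poly(\text{input size})$ DkSH algorithm then runs in $f(2k)\cdot n^{o(k)}$ total, contradicting the time lower bound even though the instance blow-up is superpolynomial. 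Soundness is also handled differently: since the biclique hardness only guarantees the output $2k$-set is $K_{t,t}$-free (not of small clique number), the paper needs a Kovari--Sos--Turan-type counting lemma (Lemma~\ref{lem:num-biclique}) bounding the number of $K_{\ell,\ell}$'s, hence of $2\ell$-cliques, in a $K_{t,t}$-free graph. If you want to salvage your approach, the structural lesson is that any clique-to-hyperedge amplification forces a superpolynomial instance blow-up, so you must anchor the argument to a hardness statement with an $n^{\Omega(k)}$ time bound rather than to FPT inapproximability.
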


The proof of our inapproximability for Densest $k$-Subhypergraph (Theorem~\ref{thm:dksh}) is unlike the others in this section: instead of starting from the inapproximability of Densest $k$-Subgraph (Theorem~\ref{thm:dks-inapprox}), we will start from the tight running time lower bound for $O(1)$-approximate $k$-Biclique (Corollary~\ref{cor:biclique}). 

\subsubsection{A Combinatorial Lemma}

Before we describe our reduction, let us state and prove the following lemma, which bounds the number of (induced) copies of $K_{\ell, \ell}$ in a $K_{t, t}$-free graph for $\ell < t$. This is a generalized setting of the classic Kovari-Sos-Turan theorem~\cite{kHovari1954problem}, which applies only to the case $\ell = 1$. Note that, due to the parameters of interest in our reduction, we only prove a good bound for large $\ell$; for $\ell = 1$, the bound in the lemma below is trivial (and hence weaker than the Kovari-Sos-Turan theorem).

\begin{lemma} \label{lem:num-biclique}
Let $\kappa, t, \ell$ be positive integers such that $\ell < t \leq \kappa / 16$. Then, for any $\kappa$-vertex $K_{t, t}$-free graph $H$, the number of (not necessarily induced) copies of $K_{\ell, \ell}$ in $H$ is at most 
\begin{align*}
\left(2 \cdot e^{-\frac{\ell^2}{16t}}\right) \cdot \binom{\kappa}{\ell}\binom{\kappa - \ell}{\ell}.
\end{align*}
\end{lemma}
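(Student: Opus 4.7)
The plan is to control the quantity $Z = \sum_{|A|=\ell}\binom{d(A)}{\ell}$, where $d(A) = |N(A)\setminus A|$ is the size of the common open neighborhood of $A$ outside of $A$; since any ordered-sides copy of $K_{\ell,\ell}$ is specified by a pair of disjoint $\ell$-subsets forming a complete bipartite subgraph, the number of copies of $K_{\ell,\ell}$ is at most $Z$. The main input from $K_{t,t}$-freeness is the standard double-counting identity
\[
\sum_{|A|=\ell}\binom{d(A)}{t} \;=\; \sum_{|T|=t}\binom{d(T)}{\ell},
\]
whose two sides both count pairs $(A, T)$ with $|A|=\ell$, $|T|=t$, $A\cap T=\emptyset$, and $A\cup T$ inducing a complete bipartite subgraph. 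Since $H$ is $K_{t,t}$-free we have $d(T)<t$ for every $t$-set $T$, and hence $\sum_A \binom{d(A)}{t}\leq \binom{t-1}{\ell}\binom{\kappa}{t} =: M$.

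I would then split $Z$ according to whether $d(A)<t$ or $d(A)\geq t$. The low part is bounded trivially by $\binom{t-1}{\ell}\binom{\kappa}{\ell}$, and using $\kappa\geq 16t$ and $\ell<t$ one has $\binom{t-1}{\ell}/\binom{\kappa-\ell}{\ell}\leq(1/14)^\ell\leq e^{-\ell^2/(16t)}$ (the last inequality because $\ell/(16t)<1/16<\ln 14$), so the low part contributes at most $e^{-\ell^2/(16t)}\binom{\kappa}{\ell}\binom{\kappa-\ell}{\ell}$. For the high part, the key combinatorial inequality is
\[
\binom{d}{\ell}\;\leq\; \binom{t}{\ell}\binom{d}{t}^{\ell/t}\qquad(d\geq t),
\]
which follows from weighted AM-GM applied to $a=d-t+1$ and $a+t$ with weights $\ell/t$ and $(t-\ell)/t$, yielding $a+t-\ell\geq a^{\ell/t}(a+t)^{(t-\ell)/t}$. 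Combined with H\"older's inequality with conjugate exponents $t/\ell$ and $t/(t-\ell)$, this gives
\[
\sum_{A:\,d(A)\geq t}\binom{d(A)}{\ell} \;\leq\; \binom{t}{\ell}\,M^{\ell/t}\,\binom{\kappa}{\ell}^{(t-\ell)/t}.
\]

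The remaining task is to verify that the high-$d$ bound is itself at most $e^{-\ell^2/(16t)}\binom{\kappa}{\ell}\binom{\kappa-\ell}{\ell}$. Using the identity $\binom{t}{\ell}\binom{\kappa}{t}=\binom{\kappa}{\ell}\binom{\kappa-\ell}{t-\ell}$, the bound reduces (after some book-keeping of the binomial coefficients) to an inequality of the form $(t/\kappa)^{\ell^2/t}\lesssim e^{-\ell^2/(16t)}$, which follows from $\kappa\geq 16t$ and the numerical fact $\ln 16 > 1/16$. Adding the low and high contributions then yields $Z\leq 2e^{-\ell^2/(16t)}\binom{\kappa}{\ell}\binom{\kappa-\ell}{\ell}$ as desired. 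The main obstacle is the binomial-coefficient arithmetic needed in this final verification; note, however, that for $\ell$ small enough that $2e^{-\ell^2/(16t)}\geq 1/2$ (say $\ell\leq\sqrt{22\,t}$) the trivial estimate of at most $\binom{\kappa}{\ell}\binom{\kappa-\ell}{\ell}/2$ unordered disjoint pairs already suffices, so one may restrict attention to the regime $\ell = \Omega(\sqrt{t})$ where the H\"older-based estimate is tight enough.
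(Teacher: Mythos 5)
Your first half (double counting copies of $K_{\ell,t}$ to get $\sum_{|A|=\ell}\binom{d(A)}{t}\le\binom{t-1}{\ell}\binom{\kappa}{t}$, then splitting the $\ell$-sets into heavy and light) is the same skeleton as the paper's proof, and your treatment of the low part ($d(A)<t$) is fine. The genuine gap is in the high part. The chain ``$\binom{d}{\ell}\le\binom{t}{\ell}\binom{d}{t}^{\ell/t}$ plus H\"older'' is correct as stated, but it is too lossy for the final verification you defer as book-keeping: in the regime $\sqrt{t}\ll\ell\ll t$ the resulting bound $\binom{t}{\ell}M^{\ell/t}\binom{\kappa}{\ell}^{(t-\ell)/t}$ exceeds even the \emph{trivial} bound $\binom{\kappa}{\ell}\binom{\kappa-\ell}{\ell}$, so no amount of binomial arithmetic can push it below $e^{-\ell^2/(16t)}\binom{\kappa}{\ell}\binom{\kappa-\ell}{\ell}$. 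Concretely, with $\kappa=16t$ one has $\frac{1}{t}\ln\binom{\kappa}{t}=\ln\frac{\kappa}{t}+\bigl(\tfrac{\kappa}{t}-1\bigr)\ln\frac{\kappa}{\kappa-t}\approx\ln 16+0.97$, while the headroom per unit $\ell$ on the right-hand side, $\frac{1}{\ell}\ln\Bigl(\binom{\kappa-\ell}{\ell}\big/\bigl(\binom{t}{\ell}\binom{\kappa}{\ell}^{\ell/t}\bigr)\Bigr)$, is only about $\ln 16-\Theta\bigl(\tfrac{\ell}{t}\ln\tfrac{\kappa}{t}\bigr)$; so your bound overshoots by a factor $e^{\Theta(\ell)}$ unless $\ell=\Omega\bigl(t/\ln(\kappa/t)\bigr)$ (e.g.\ $t=10^4$, $\ell=10^3$, $\kappa=16t$ gives an overshoot of roughly $e^{600}$). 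Your fallback via the trivial count only covers $\ell=O(\sqrt{t})$, leaving the intermediate range $\sqrt{t}\ll\ell\ll t$ uncovered --- and this is exactly the range the paper needs in Theorem~\ref{thm:dksh}, where $\ell=\lceil\rho/g(\rho)^{0.1}\rceil$ is $o(t)$ but $\omega(\sqrt{t})$.

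The fix (and what the paper does) is to place the heavy/light threshold not at $t$ but near $\kappa$: set $x=\gamma\kappa+(1-\gamma)t$ with $\gamma=\bigl((\kappa-\ell)/t\bigr)^{-\ell/(2t)}$, so $1-\gamma=\Theta\bigl(\tfrac{\ell}{t}\ln\tfrac{\kappa}{t}\bigr)$. A Markov-type use of the $K_{\ell,t}$ count shows there are at most $2^{-\ell/2}\binom{\kappa}{\ell}$ sets $S$ with $|N(S)|\ge x$, and for all remaining sets one bounds $\binom{|N(S)|}{\ell}\le\binom{x}{\ell}\le e^{-\Omega(\ell^2/t)}\binom{\kappa-\ell}{\ell}$ directly; summing the two contributions gives the stated bound for all $\ell<t\le\kappa/16$. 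So the decomposition idea is right, but you need this $\kappa$-scale threshold (or some replacement for the pointwise-plus-H\"older step) rather than the cut at $d(A)\ge t$.
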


\begin{proof}
Let $V$ denote the vertex set of $H$. We will count the number of copies of $K_{\ell, t}$ in $H$ in two ways. First, for every subset $S \in \binom{V}{\ell}$, the number of $(\ell, t)$-bicliques of the form $(S, T)$ where $T \in \binom{V}{t}$ is $\binom{|N(S)|}{t}$ where $N(S)$ denote the set of common neighbors of $S$. Hence, in total the number of $(\ell, t)$-bicliques in $H$ is $$\sum_{S \in \binom{V}{\ell}} \binom{|N(S)|}{t}.$$
On the other hand, for every set $T \in \binom{V}{t}$, we must have $|N(T)| \leq t - 1$ since $H$ does not contain $K_{t, t}$. As a result, the number of $(\ell, t)$-bicliques of the form $(S, T)$ for a fixed $T$ is at most $\binom{t - 1}{\ell}$. That is, in total, there can be at most $\binom{\kappa}{t} \binom{t - 1}{\ell}$ copies of $K_{\ell, t}$ in the graph. This implies that
\begin{align} \label{eq:main}
\binom{\kappa}{t} \binom{t - 1}{\ell} \geq \sum_{S \in \binom{V}{\ell}} \binom{|N(S)|}{t}.
\end{align}

For brevity, let $\lambda$ denote $(\kappa - \ell)/t$, $\gamma$ denote $\lambda^{-\ell/(2t)}$ and let $x$ denote $\gamma \cdot \kappa + (1 - \gamma) \cdot t$ (note $x$ is chosen so that $\frac{x - t}{\kappa - t} = \gamma$). Let us now classify $S \in \binom{A}{\ell}$ into two groups: those with $|N(S)| \geq x$ and those with $|N(S)| < x$. That is, we define
\begin{align*}
\calS_{\geq x} &:= \left\{S \in \binom{V}{\ell} \middle\vert |N(S)| \geq x\right\}, \qquad\text{and}\qquad
\calS_{< x} := \left\{S \in \binom{V}{\ell} \middle\vert |N(S)| < x\right\}.
\end{align*}

From~\eqref{eq:main}, we have
\begin{align*}
\binom{\kappa}{t}\binom{t-1}{\ell} \geq \sum_{S \in \calS_{\geq x}} \binom{|N(S)|}{t} \geq |\calS_{\geq x}| \cdot \binom{\lceil x \rceil}{t}.
\end{align*}
Rearranging, we have
\begin{align}
|\calS_{\geq x}| 
\leq \frac{\binom{t - 1}{\ell} \binom{\kappa}{t}}{ \binom{\lceil x \rceil}{t}} 
\leq \binom{t - 1}{\ell} \left(\frac{\kappa - t}{x - t}\right)^t
= \binom{t - 1}{\ell} \gamma^{-t}
&\leq \binom{\kappa}{\ell} \left(\frac{t - 1}{\kappa}\right)^{\ell} \gamma^{-t}\nonumber\\
&\leq \binom{\kappa}{\ell} \lambda^{-\ell} \gamma^{-t}
= \binom{\kappa}{\ell} \lambda^{-\ell/2}
 \leq \binom{\kappa}{\ell}2^{-\ell/2}, 
\label{eq:tail-bound}
\end{align}
\noindent where the last line follows because  $t, \ell \leq \frac{1}{3}\kappa$.

Let us now count the number of $K_{\ell, \ell}$ in $H$. For each fixed $S \in \binom{V}{\ell}$, the number of $K_{\ell, \ell}$ of the form $(S, T)$ where $T \in \binom{V}{\ell}$ is exactly $\binom{|N(S)|}{\ell}$. Thus, the total number of $K_{\ell, \ell}$ in $G$ is
\begin{align*}
\sum_{S \in \binom{V}{\ell}} \binom{|N(S)|}{\ell}.
\end{align*}

This term can be further bounded by
\begin{align}
\sum_{S \in \binom{V}{\ell}} \binom{|N(S)|}{\ell} 
&= \sum_{S \in \calS_{\geq x}} \binom{|N(S)|}{\ell} + \sum_{S \in \calS_{< x}} \binom{|N(S)|}{\ell} \nonumber \\
&\leq |\calS_{\geq x}| \binom{\kappa - \ell}{\ell} + |\calS_{< x}| \binom{x}{\ell} \nonumber \\
&\overset{\eqref{eq:tail-bound}}{\leq} 2^{-\ell/2}\binom{\kappa}{\ell}\binom{\kappa - \ell}{\ell} + \binom{\kappa}{\ell}\binom{\lfloor x \rfloor}{\ell} \nonumber \\
&\leq 2^{-\ell/2}\binom{\kappa}{\ell}\binom{\kappa - \ell}{\ell} + \binom{\kappa}{\ell}\binom{\kappa - \ell}{\ell} \cdot \left(\frac{x}{\kappa - \ell}\right)^\ell \nonumber \\
&\leq 2^{-\ell/2}\binom{\kappa}{\ell}\binom{\kappa - \ell}{\ell} + \binom{\kappa}{\ell}\binom{\kappa - \ell}{\ell} \cdot \left(1 - \frac{(1 - \gamma)(\kappa - t) - \ell}{\kappa - \ell}\right)^\ell \nonumber \\
&= 2^{-\ell/2}\binom{\kappa}{\ell}\binom{\kappa - \ell}{\ell} + \binom{\kappa}{\ell}\binom{\kappa - \ell}{\ell} \cdot \left(1 - \frac{(1 - \gamma - \frac{\ell}{\kappa - t})(\kappa - t)}{\kappa - \ell}\right)^\ell \nonumber \\
(\text{From } \ell < t \leq \kappa/2) &\leq 2^{-\ell/2}\binom{\kappa}{\ell}\binom{\kappa - \ell}{\ell} + \binom{\kappa}{\ell}\binom{\kappa - \ell}{\ell} \cdot \left(1 - 0.5(1 - \gamma - 2\ell/\kappa)\right)^\ell \nonumber \\
&\leq 2^{-\ell/2}\binom{\kappa}{\ell}\binom{\kappa - \ell}{\ell} + \binom{\kappa}{\ell}\binom{\kappa - \ell}{\ell} \cdot e^{-0.5\ell(1 - \gamma - 2\ell/\kappa)} \label{eq:bound2}
\end{align}

Consider the term $(1 - \gamma - 2\ell/\kappa)$. We can bound it as follows:
\begin{align*}
(1 - \gamma - 2\ell/\kappa) 
&= \left(1 - \frac{1}{\lambda^{\ell/2t}} - \frac{2\ell}{\kappa}\right) \\
(\text{From } \lambda = (\kappa - \ell) / t \geq 2) &\geq \left(1 - 0.5^{\ell/2t} - \frac{2\ell}{\kappa}\right) \\
(\text{Bernoulli inequality}) &\geq \left(1 - \left(1 - \frac{\ell}{4t}\right) - \frac{2\ell}{\kappa}\right) \\
&= \left(\frac{\ell}{4t} - \frac{2\ell}{\kappa}\right) \\
(\text{From } \kappa \geq 16t) &\geq \frac{\ell}{8t}.
\end{align*}
Plugging this back into~\eqref{eq:bound2}, we have
\begin{align*}
\sum_{S \in \binom{A}{\ell}} \binom{|N(S)|}{\ell} 
&\leq \binom{\kappa}{\ell}\binom{\kappa - \ell}{\ell} \cdot \left(2^{-\ell/2} + e^{-\frac{\ell^2}{16t}}\right) \\
(\text{From } \ell < t) &\leq \binom{\kappa}{\ell}\binom{\kappa - \ell}{\ell} \cdot \left(2 \cdot e^{-\frac{\ell^2}{16t}}\right).
\end{align*}
This concludes our proof.
\end{proof}

\subsubsection{Proof of Theorem~\ref{thm:dksh}}

We are now ready to prove Theorem~\ref{thm:dksh}.

\begin{proof}[Proof of Theorem~\ref{thm:dksh}]
Suppose contrapositively that there is an $f(\rho) \cdot \poly(n)$-time $2^{\rho/g(\rho)}$-approximation algorithm $\bA$ for Densest $\rho$-Subhypergraph where $g = \omega(1)$. We will construct an algorithm $\bB$ that achieves $O(1)$-approximation for $k$-Biclique with the promise that a $2k$-Clique exists in time $h(k) \cdot n^{o(k)}$ for some function $h$. Theorem~\ref{thm:dksh} then follows from Corollary~\ref{cor:biclique}.

We define $\bB$ on input $(G = (V, E), k)$ as follows:
\begin{itemize}
\item Let $\rho = 2k$ and $\ell = \lceil \rho/g(\rho)^{0.1} \rceil$.
\item Construct a hypergraph $G'$ with the same vertex set as $G$, and we add a hyperedge $e = \{u_1, \dots, u_{2\ell}\}$ to $G'$ for all distinct $u_1, \dots, u_{2\ell} \in V$ that induce a $2\ell$-clique in $G$. 
\item Run $\bA$ on $(G', \rho)$. Let $S$ be $\bA$'s output.
\item Use brute force to find a maximum balanced biclique in $S$ and output it.
\end{itemize}
Notice that algorithm $\bB$ runs in time $(f(2k) + 2^{O(k)}) \cdot n^{O(\ell)} = (f(2k) + 2^{O(k)}) \cdot n^{o(k)}$ as desired, where the term $2^{O(k)}$ comes from the last step.

Next, we claim that, when $k$ is sufficiently large, the output biclique has size at least $t := \lfloor k / 8 \rfloor$, which would give us the desired $O(1)$-approximation ratio. Suppose for the sake of contradiction that this is not true, i.e. that the induced graph $G[S]$ is $K_{t, t}$-free. 

For any sufficiently large $k$, we have $\ell < t$. Hence, we may apply Lemma~\ref{lem:num-biclique}, which gives the following upper bound on the number of not-necessarily induced copies of $K_{\ell, \ell}$ in $G[S]$:
\begin{align*}
\left(2\cdot e^{-\frac{\ell^2}{16t}}\right) \cdot \binom{2k}{\ell}\binom{2k - \ell}{\ell} \leq e^{-\Omega(k/g(k)^{0.2})} \cdot \binom{2k}{\ell}\binom{2k - \ell}{\ell}.
\end{align*}
However, since each hyperedge $e = \{u_1, \dots, u_{2\ell}\}$ corresponds to $\binom{2\ell}{\ell}$ copies of $K_{\ell, \ell}$, the number of hyperedges fully contained in $S$ is thus at most
\begin{align*}
e^{-\Omega(k/g(k)^{0.2})} \cdot \binom{2k}{\ell}\binom{2k - \ell}{\ell} / \binom{2\ell}{\ell} = e^{-\Omega(k/g(k)^{0.2})} \cdot \binom{2k}{2\ell},
\end{align*}
which is less than $2^{-\rho/g(\rho)} \cdot \binom{2k}{2\ell}$ for any sufficiently large $k$. This contradicts the approximation guarantee of $\bA$ since the optimal solution (corresponding to the $2k$-clique) contains $\binom{2k}{2\ell}$ hyperedges.
\end{proof}

\subsection*{Acknowledgment}
Pasin Manurangsi would like to thank Michal Pilipczuk and Daniel Lokshtanov for posing the approximability of Densest $k$-Subhypergraph as an open problem at Dagstuhl Seminar on New Horizons in Parameterized Complexity, and for helpful discussions.

\bibliographystyle{alpha}
\bibliography{refs}

\appendix

\section{Deferred proofs from Section~\ref{subsec:generic-soundness}}
\label{app:soundness-helper-lemmas}
In this appendix we include omitted proofs from Section~\ref{subsec:generic-soundness}.

\begin{obs*}[Restatement of Observation~\ref{obs:density-min-deg}]
For any $H = (V_H, G_H)$, there exists $S' \subseteq H$ such that $\mindeg(H[S']) \geq \den(H)$.
\end{obs*}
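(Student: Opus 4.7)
The plan is to prove the observation by the standard greedy peeling argument. Let $d := \den(H) = |E_H|/|V_H|$; I want to produce the set $S'$ iteratively.

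First, I would describe the process: initialize $S \leftarrow V_H$, and while $\mindeg(H[S]) < d$, remove from $S$ a vertex $v$ achieving the minimum degree in $H[S]$. Whenever the loop terminates, output $S' := S$. By construction, the terminating set satisfies $\mindeg(H[S']) \geq d$, which is the desired conclusion, so the real content is to show that the loop does terminate before $S$ becomes empty.

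The key invariant I would maintain is that $\den(H[S]) \geq d$ throughout the process. This holds initially by definition. For the inductive step, suppose $\den(H[S]) \geq d$ and we remove a vertex $v$ with $\deg_{H[S]}(v) < d$. Then the new edge count is $|E(H[S])| - \deg_{H[S]}(v)$ and the new vertex count is $|S| - 1$, and I would verify the inequality
\[
|E(H[S])| - \deg_{H[S]}(v) \;\geq\; d\,(|S| - 1),
\]
which rearranges to $|E(H[S])| - d|S| \geq \deg_{H[S]}(v) - d$. The left-hand side is $\geq 0$ by the inductive hypothesis, while the right-hand side is $<0$ by the choice of $v$, so the inequality holds and the invariant is preserved.

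Finally I would note termination: as long as $|S| \geq 1$, the invariant gives $|E(H[S])| \geq d|S|$, so if $d > 0$ the set $S$ cannot shrink to a single vertex (which would force $0$ edges, contradicting $|E(H[S])| \geq d \cdot 1 > 0$); hence the loop must exit through its stopping condition while $S$ is still non-empty. The case $d = 0$ is trivial since any singleton $S'$ satisfies $\mindeg(H[S']) = 0 \geq d$. I don't anticipate a real obstacle here; the only subtle point is making sure the peeling step strictly preserves the density lower bound, which the short algebraic check above handles.
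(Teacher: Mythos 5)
Your proof is correct and follows essentially the same greedy peeling approach as the paper: repeatedly delete vertices of degree below $\den(H)$ and argue the process stops at a non-empty set. The only cosmetic difference is in how non-emptiness is certified --- you maintain the density invariant $\den(H[S]) \geq \den(H)$ step by step, whereas the paper simply counts that the total number of edges deleted is less than $\den(H)\cdot|V_H| = |E_H|$; both are valid.
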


\begin{proof}[Proof of Observation~\ref{obs:density-min-deg}]
Let us repeatedly apply the following: remove any vertex with degree less than $\den(H)$ from the graph. In total, the number of edges removed is less than $\den(H) \cdot |V_H| = |E_H|$. Thus, the final graph is not empty; by definition, its minimum degree is at least $\den(H)$.
\end{proof}

\restatelemma{lem:disperser}

\begin{proof}[Proof of Lemma~\ref{lem:disperser}]
For a fixed $M \subseteq N$, we have
\begin{align*}
\Pr\left[\left|\bigcup_{i \in M} S_i\right| < 0.01 \delta |M| \ell\right]
&\leq \sum_{U \subseteq V \atop |U| = \lceil 0.01 \delta |M| \ell \rceil - 1} \Pr\left[\left(\bigcup_{i \in M} S_i\right) \subseteq U\right] \\
&= \sum_{U \subseteq V \atop |U| = \lceil 0.01 \delta |M| \ell \rceil - 1} \prod_{i \in M} \Pr\left[S_i \subseteq U\right] \\
&= \sum_{U \subseteq V \atop |U| = \lceil 0.01 \delta |M| \ell \rceil - 1} \prod_{i \in M} \left(|U| / n\right)^{\ell} \\
&= \sum_{U \subseteq V \atop |U| = \lceil 0.01 \delta |M| \ell \rceil - 1} \left(|U| / n\right)^{|M| \ell} \\
&\leq n^{\lceil 0.01 \delta |M| \ell \rceil - 1} \left(\frac{\lceil 0.01 \delta |M| \ell \rceil - 1}{n}\right)^{|M| \ell} \\
&\leq n^{0.01 \delta |M| \ell} \left(\frac{0.01 \delta |M| \ell}{n}\right)^{|M| \ell} \\
&= \left(\frac{0.01\delta|M| \ell}{n^{1 - 0.01\delta}}\right)^{|M|\ell}.
\end{align*}
Let $\Delta = \lfloor n^{0.99\delta} / \ell \rfloor$. Now, we take the union bound over all $M \subseteq [N]$ of size at most $\Delta$. We have
\begin{align*}
\Pr\left[\exists M \subseteq [N], |M| \leq \Delta \text{ and } \left|\bigcup_{i \in M} S_i\right| < 0.01 |M| \ell\right]
&\leq \sum_{t = 1}^{\Delta} \binom{N}{t} \left(\frac{0.01\delta t \ell}{n^{1 - 0.01\delta}}\right)^{t \ell} \\
&\leq \sum_{t = 1}^{\Delta} N^t \left(\frac{0.01\delta t \ell}{n^{1 - 0.01\delta}}\right)^{t \ell} \\
&\leq \sum_{t = 1}^{\Delta} \left(\frac{0.01\delta N^{1/\ell} t \ell}{n^{1 - 0.01\delta}}\right)^{t \ell} 
\intertext{ And since by assumption $N \leq 1000k \cdot n^{(1 - \delta) \ell}$ and $\ell \geq k \geq 20$,} 
&\leq \sum_{t = 1}^{\Delta} \left(\frac{0.04 t \ell}{n^{0.99\delta}}\right)^{t \ell} \\
\intertext{And finally because we have $t \leq \Delta \leq n^{0.99\delta} / \ell$,} 
&\leq \sum_{t = 1}^{\Delta} 0.04^t
\, \leq 0.05. \qedhere
\end{align*}
\end{proof}

\end{document}